\newcommand{\Rea}{\mathbb{R}}
\newcommand{\GENEO}{\mathcal{F}_{\mathrm{all}}}
\newcommand{\dmatch}{d_{\mathrm{match}}}
\newcommand{\LipL}{\mathrm{Lip}_L}
\theoremstyle{thmstyleone}%
\newtheorem{theorem}{Theorem}%
\newtheorem{proposition}[theorem]{Proposition}%
\newtheorem{lemma}{Lemma}%
\newtheorem{corollary}[theorem]{Corollary}%
\theoremstyle{thmstyletwo}%
\theoremstyle{thmstylethree}%
\newtheorem{definition}{Definition}%
\begin{document}
	
	\title[Noise reduction through GENEOs]{A probabilistic result on impulsive noise reduction in Topological Data Analysis through Group Equivariant Non-Expansive Operators}
	
	
	\author[1,2,3,4]{\fnm{Patrizio} \sur{Frosini}}\email{patrizio.frosini@unibo.it}
	
	\author[1]{\fnm{Ivan} \sur{Gridelli}}\email{ivangridelli@gmail.com}
	
	\author[1]{\fnm{Andrea} \sur{Pascucci}}\email{andrea.pascucci@unibo.it}
	
	\affil[1]{\orgdiv{Department of Mathematics}, \orgname{University of Bologna}, \orgaddress{\country{Italy}}}
	\affil[2]{\orgdiv{ALMA-AI}, \orgname{University of Bologna}, \orgaddress{\country{Italy}}}
    \affil[3]{\orgdiv{AM\textsuperscript{2}}, \orgname{University of Bologna}, \orgaddress{\country{Italy}}}
    \affil[4]{\orgdiv{ARCES}, \orgname{University of Bologna}, \orgaddress{\country{Italy}}}

	\abstract{In recent years, group equivariant non-expansive operators (GENEOs) have attracted attention in the fields of Topological Data Analysis and Machine Learning.
		In this paper we show how these operators can be of use also for the removal of impulsive noise and to increase the stability of TDA in the presence of noisy data.
		In particular, we prove that GENEOs can control the expected value of the perturbation of persistence diagrams caused by uniformly distributed impulsive noise, when data are represented by $L$-Lipschitz functions from $\Rea$ to $\Rea$.}
	
	\keywords{GENEO, impulsive noise, persistence diagram, persistent homology, machine learning}
	
	\pacs[MSC Classification]{MSC Primary: 55N31, 62R40; Secondary: 60-08, 65D18, 68T09, 68U05.}
	
	\maketitle
	
	\section{Introduction}
	
	In the last thirty years Topological Data Analysis (TDA) developed as a useful mathematical theory for analyzing data, benefiting from the reduction of dimensionality guaranteed by topology \cite{EdHa08,BiDFFaal08,Ca2009}.
	One of the main tools in TDA is the concept of persistence diagram, which is a collection of points in the real plane, describing the homological changes of the sublevel sets of suitable continuous functions. These changes give important information about the data of interest, focusing on some of their most relevant properties. Persistence diagrams can be used in the presence of noise, since a well-known stability theorem states that these topological descriptors change in a controlled way when we know that the functions expressing the filtrations we are interested in change in a controlled way with respect to the sup-norm \cite{CSEdHa07}. Furthermore, $L^p$-stability of persistence diagrams with respect to the sup-norm have been proved in \cite{CoEdHaMi10}. Unfortunately, in many applications the sup-norm of the noise is not guaranteed to be small, and hence these results cannot be directly applied. In particular, these results cannot be directly used when data are affected by impulsive noise.
	
	Analogously, in the discrete setting of TDA the presence of outliers in cloud points can drastically affect the corresponding persistence diagrams. The problem of managing outliers has been studied by several authors by different techniques.
	In \cite{FaLeRiWaBaSi14} an approach based on confidence sets has been introduced.
	A method inspired from the $k$-nearest neighbors regression and the local median filtering has been used in \cite{BuChDeFaOuWa15},
	while the concept of bagplot has been applied in \cite{AdAg19}.
	In \cite{ViFuKuSrBh20} an approach based on reproducing kernels has been proposed.
	
	In our paper we start exploring a different probabilistic approach, in the topological setting. The main idea is that studying the properties of data is a process that should be primarily based on the analysis of the observers that are asked to examine the data, since we cannot ignore that different observers can differently judge the same data. This approach has been initially proposed in \cite{BeFrGiQu19} and requires both the definition of the space of Group Equivariant Non-Expansive Operators (GENEOs) and the development of geometrical techniques to move around in this space \cite{CoFrQu22}. In other words, in this model we should not wonder how we have to manage the data but rather how we have to manage the observers (i.e., GENEOs) analyzing the data.
	As an initial step in this direction, in this paper we show how GENEOs can be used to get stability of persistence diagrams of 1D-signals in the presence of impulsive noise.
	
	
	GENEOs have been studied in \cite{FrJa16} as a new tool in TDA, since they allow for an extension of the theory that is not invariant under the action of every homeomorphism of the considered domain. This is important in applications where the invariance group is not the group of all homeomorphisms, such as the ones concerning shape comparison. Interestingly, GENEOs are also deeply related to the foliation method used to define the matching distance in 2-dimensional persistent homology \cite{CeDFFeal13,CeEtFr19} and can be seen as a theoretical bridge between TDA and Machine Learning \cite{BeFrGiQu19}. Furthermore, these operators make available lower bounds for the natural pseudo-distance $d_G(\varphi_1,\varphi_2) := \inf\limits_{g\in G} \|\varphi_1-\varphi_2\circ g\|_\infty$, associated with a group $G$ of self-homeomorphisms of the domain of the signals $\varphi_1,\varphi_2$ \cite{FrJa16}.
	
	In our paper, we prove that GENEOs can control the expected value of the perturbation of persistence diagrams caused by uniformly distributed impulsive noise, when data are represented by $L$-Lipschitz functions from $\Rea$ to $\Rea$.
	In order to do that, we choose a ``mother'' bump function $\psi$, i.e. a continuous function that is non-negative, upper bounded by $1$ and compactly supported, and we assume that our noise is made up of finite bumps each obtained by translating, heightening and/or widening $\psi$. The function  $\hat\varphi=\varphi+R$ represents the corrupted data, where the noise is given  by the function $R=\sum\limits_{i=1}^k a_i\psi(b_i(x-c_i))$ for some $a_i,b_i,c_i\in\Rea$, with $b_i>0$, and some positive integer $k$.
	
	In this situation, trying to use a convolution to approximate our starting data is not effective, because even if it does contract the bumps, it does not cut them, and hence does not improve the sup-norm distance from the original data.
	A classical approach for the removal of impulsive noise is the one of using a median filter\cite{Va08}. Although this approach would be quite efficient in the discrete case, let us remark that in our setting we are considering continuous functions. The analogous of the median for the continuous case would be defined as the interval of those values $m$ such that $$\int\limits_{-\infty}^{m} f(x)dx=\int\limits_{m}^{+\infty} f(x)dx$$ where $f$ is a density of probability. However, this operator is not stable: a small alteration of the starting function could lead to a significant change of the median.
	
	The operators we consider in this paper are $F^\delta=\max (\varphi(x-\delta),\varphi(x+\delta))$ and $F_\varepsilon=\min (\varphi(x-\varepsilon),\varphi(x+\varepsilon))$. The main idea is that $F^\delta$ cuts the noise ``directed downwards'' and $F_\varepsilon$ the noise ``directed upwards'', and hence their composition should be able to eliminate all the bumps. These operators are GENEOs with respect to isometries of the real line. We prove that moving in the space of GENEOs by taking suitable values for $\varepsilon$ and $\delta$, we can get quite close to restoring the original function $\varphi$, depending on how the bumps are positioned. The closer the bumps are to being Dirac delta functions and the further they are from each other, the better our approximation can be.
	On the ground of this result, we finally get an estimate of the expected value $E(\|F^{\delta} \circ F_\varepsilon (\hat\varphi)-\varphi\|_\infty)$.
	
	The paper is structured as follows. In Section 2 the mathematical background is laid. The case we consider and our notation are explained in Section 3. In Section 4 we prove the results that are needed in order to demonstrate our main result. In Section 5 the main theorem giving us a probabilistic upper bound is formulated. In Section 6 some examples and experiments are presented in order to better illustrate the use of our results. A brief discussion concludes the paper.
	
	\section{Mathematical setting}
	In this section we will recall some basic concepts we will use in this paper.
	
	\subsection{Representing data as real functions}
	Let us consider a set $\varPhi$ of bounded functions from a set $X$ to $\Rea$, which will represent the data we wish to compare. We shall call $\varPhi$ the set of \emph{admissible measurements} on $X$. We endow $\varPhi$ with the topology induced by the sup-norm $\|\cdotp\|_{\infty}$ and the corresponding distance $D_\varPhi$.
	A pseudo-metric $D_X$ can be defined on $X$ by setting $D_X(x_1,x_2)=\sup\limits_{\varphi\in\varPhi}\lvert \varphi(x_1)-\varphi(x_2)\rvert$ for every $x_1,x_2\in X$.
	We recall that a pseudo-metric on a set $X$ is  a distance $d$ without the property $d(x,y)=0\implies x=y$.
	We will consider the topological space $(X,\tau_{D_X})$ where $\tau_{D_X}$ is the topology induced by $D_X$. A base for this topology is given by the open balls $\{B(x,r):=\{x'\in X:D_X(x,x')<r\}$ with $x\in X$, $r\in\Rea\}$.
	The choice of this topology makes every function in $\varPhi$ a continuous functions. As shown in \cite{BeFrGiQu19}, this fact enables us to use persistence diagrams in the study of $\varPhi$.
	
	\subsection{GENEOs as operators acting on data}
	We are interested in considering transformations of data. Let $\mathrm{Homeo}_\varPhi(X)$ be the set of $\varPhi$-preserving homeomorphisms from $X$ to $X$ with respect to the topology $\tau_{D_X}$, meaning that every $g$ in $\mathrm{Homeo}_\varPhi(X)$ is a homeomorphism of $X$ such as both $\varphi \circ g$ and $\varphi \circ g^{-1}$ belong to $\varPhi$ for every $\varphi$ in $\varPhi.$
	Let $G$ be a subgroup of $\mathrm{Homeo}_\varPhi(X)$. $G$ represents the set of transformations on data for which we will require equivariance to be respected.
	
	Under the previously stated assumptions we call the ordered pair $(\varPhi,G)$ a \emph{perception pair}.
	We can now introduce the concept of GENEO.
	
	\begin{definition}
		Let $(\varPhi,G)$ and $(\Psi,H)$ be perception pairs and assume that a homomorphism $T:G \to H$ is given.
		A function $F:\varPhi \to \Psi$ is called a \emph{Group Equivariant Non-Expansive Operator (GENEO) from $(\varPhi,G)$ to $(\Psi,H)$ with respect to $T$} if
		the following properties hold:
		\begin{enumerate}
			\item (Group Equivariance) $F(\varphi \circ g) =F(\varphi) \circ T(g)$ for every $\varphi \in \varPhi,g \in G$;
			\item (Non-Expansivity) $D_\Psi(F(\varphi_1),F(\varphi_2))\leq D_\varPhi(\varphi_1,\varphi_2)$ for every $\varphi_1,\varphi_2\in\varPhi$.
		\end{enumerate}
	\end{definition}
	
	
	Let us now consider the set $\GENEO$ of all GENEOs from $(\varPhi,G)$ to $(\Psi,H)$ with respect to $T:G\to H$.
	The space $\GENEO$ is endowed with the extended pseudo-metric $D_{\GENEO}$, defined by setting $D_{\GENEO}(F_1,F_2)=\sup_{\varphi\in\varPhi} D_\Psi(F_1(\varphi),F_2(\varphi))$ for every $F_1,F_2\in\GENEO$. The word \emph{extended} refers to the possibility that $D_{\GENEO}$ takes an infinite value.
	
	The following result can be proven \cite{BeFrGiQu19}:
	\begin{theorem}
		If $(\varPhi,D_\varPhi)$,$(\Psi,D_\Psi)$ are compact and convex then the metric space $(\GENEO,D_{\GENEO})$ is compact and convex.
	\end{theorem}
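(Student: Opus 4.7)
The plan is to prove the two assertions separately: convexity by direct verification that convex combinations of GENEOs remain GENEOs, and compactness via an Arzelà--Ascoli-type argument exploiting that every $F\in\GENEO$ is $1$-Lipschitz and takes values in the compact set $\Psi$.

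For convexity, given $F_1,F_2\in\GENEO$ and $\lambda\in[0,1]$, I would define $F_\lambda:=\lambda F_1+(1-\lambda)F_2$ pointwise. First, $F_\lambda(\varphi)\in\Psi$ because $\Psi$ is convex. Equivariance follows since post-composition with $T(g)$ acts pointwise and therefore commutes with convex combinations: $F_\lambda(\varphi\circ g)=\lambda F_1(\varphi)\circ T(g)+(1-\lambda)F_2(\varphi)\circ T(g)=F_\lambda(\varphi)\circ T(g)$. Non-expansivity follows from the triangle inequality for the sup-norm applied pointwise, using $\|\lambda u_1+(1-\lambda)u_2\|_\infty\le\lambda\|u_1\|_\infty+(1-\lambda)\|u_2\|_\infty$ together with the non-expansivity of each $F_i$. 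Hence $F_\lambda\in\GENEO$.

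For compactness, I would use that every $F\in\GENEO$ is a $1$-Lipschitz map from the compact metric space $(\varPhi,D_\varPhi)$ into the compact metric space $(\Psi,D_\Psi)$. The family $\GENEO$ is therefore uniformly equicontinuous, and for each fixed $\varphi\in\varPhi$ the orbit $\{F(\varphi):F\in\GENEO\}$ lies in the compact set $\Psi$. Applying the Arzelà--Ascoli theorem to maps between compact metric spaces yields that $\GENEO$ is relatively compact in the topology of uniform convergence, which here coincides with the topology induced by $D_{\GENEO}$. It then suffices to check that $\GENEO$ is closed: if $F_n\to F$ uniformly on $\varPhi$, then non-expansivity passes to the limit by continuity of $D_\Psi$, and equivariance passes to the limit because post-composition with the fixed homeomorphism $T(g)$ is a sup-norm isometry, so $F_n(\varphi)\circ T(g)\to F(\varphi)\circ T(g)$ while simultaneously $F_n(\varphi\circ g)\to F(\varphi\circ g)$.

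The main obstacle is the compactness half, and more specifically the justification that each $F$ is continuous as a map between the metric spaces $(\varPhi,D_\varPhi)$ and $(\Psi,D_\Psi)$ in a way uniform across the whole family. This is in fact immediate from condition~2 of the definition, which gives Lipschitz constant $1$ uniformly in $F$; once this observation is made, Arzelà--Ascoli applies straightforwardly and the remaining verifications (stability of equivariance and non-expansivity under uniform limits) are routine continuity arguments. Convexity poses no substantive difficulty.
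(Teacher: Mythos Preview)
The paper does not actually prove this theorem; it merely states it and attributes the result to \cite{BeFrGiQu19}. Your argument is correct and is essentially the standard one used in that reference: convexity by direct verification that pointwise convex combinations preserve equivariance, non-expansivity, and membership in $\Psi$, and compactness via the generalized Arzel\`a--Ascoli theorem applied to the uniformly $1$-Lipschitz family $\GENEO\subseteq C(\varPhi,\Psi)$, followed by a closedness check. One small point worth making explicit in the closedness step is that the limit map $F$ automatically takes values in $\Psi$ because $\Psi$, being compact, is complete; otherwise your sketch is complete.
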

	
	If a non-empty set $\mathcal{F}\subseteq \GENEO$ is fixed, we can define the following pseudo-distance $D_{\mathcal{F},\varPhi}$ on $\varPhi$:
	\begin{definition}
		For any $\varphi_1,\varphi_2$ in $\varPhi$ we set
		$$D_{\mathcal{F},\varPhi}(\varphi_1,\varphi_2)=\sup_{F\in\mathcal{F}} \|F(\varphi_1)-F(\varphi_2)\|_\infty.$$
	\end{definition}
	This pseudo-distance allows us to compare data by taking into account how agents operate on data.
	Notice how, if $G$ becomes larger, the natural pseudo-distance $d_G$ becomes harder to compute but this new pseudo-distance $D_{\mathcal{F},\varPhi}$ becomes easier to evaluate.
	In order to find a lower bound for $D_{\mathcal{F},\varPhi}$ it is useful to introduce the notion of persistence diagram.
	
	\subsection{Persistence Diagrams}
	We will now recall some basic definitions and results in persistent homology. The interested reader can find more details in \cite{EdHa08}.
	
	Let us consider an ordered pair $(X,\varphi)$ where $X$ is a topological space and $\varphi:X\to\Rea$ is a continuous function. For any $t\in\Rea$ we can set $X_t:=\varphi^{-1}(\left]-\infty,t\right])$. If $u<v$ the inclusion $i_{u,v}:X_u\to X_v$ induces a homomorphism $i^k_{u,v}:H_k(X_u)\to H_k(X_v)$ between the $k$\textsuperscript{th} homology groups of $X_u$ and $X_v$. We can define the \emph{$k$\textsuperscript{th} persistent homology group}, with respect to $\varphi$ and computed at the point $(u,v)$, as $PH_k(u,v):=i^k_{u,v}(H_k(X_u))$.
	Moreover, we can define the \emph{$k$\textsuperscript{th} persistent Betti numbers function} $r_k(u,v)$ as the rank of $PH_k(u,v)$.
	
	The $k$\textsuperscript{th} persistent Betti numbers function can be represented by the $k$\textsuperscript{th} persistence diagram. This diagram is defined as the multi-set of all the ordered pairs $(u_j, v_j),$ where $u_j$ and $v_j$ are the times of birth and death of the $j$\textsuperscript{th} $k$-dimensional hole in $X$, respectively. We call \emph{time of birth} of a hole the first time at which the homology class appears, and \emph{time of death} the first time at which the homology class merges with an older one. When a hole never dies, we set its time of death equal to $\infty$. We also add to this set all points of the form $(w,w)$ for $w\in\Rea$.
	
	In Figure~\ref{filtration} the filtration of the set $X:=\left[0,\frac{3}{4}\pi\right]$ given by the function $\varphi(x)=2\sin x$ is illustrated. In this example, the topology on $X$ is the one defined by the space $\varPhi$ of admissible functions given by all functions $a\sin(x+b)$ with $a,b\in \Rea$. The reader can easily check that this topology coincides with the Euclidean topology. The persistence diagram in degree $k=0$ of the function $\varphi$ is displayed in Figure~\ref{pd}.
	
	
	\begin{figure}[t]
		\begin{center}
			\begin{tabular}{r c}
			\includegraphics[width=4.8cm]{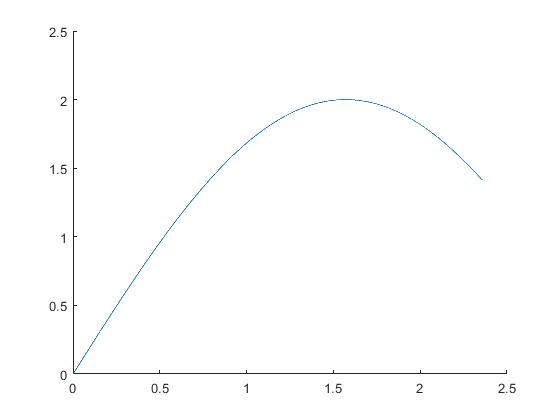} & \includegraphics[width=4.8cm]{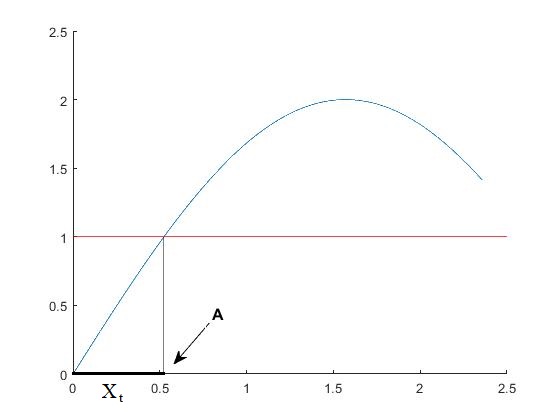} \\
			\includegraphics[width=4.8cm]{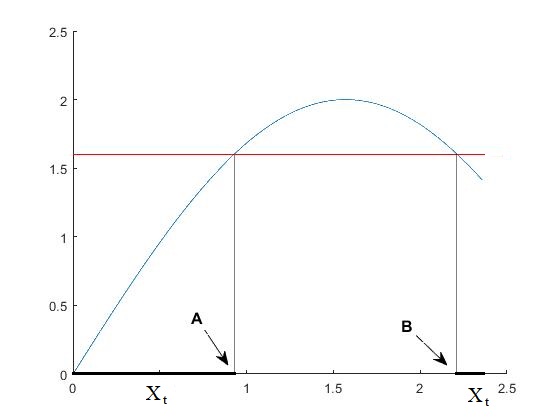} & \includegraphics[width=4.8cm]{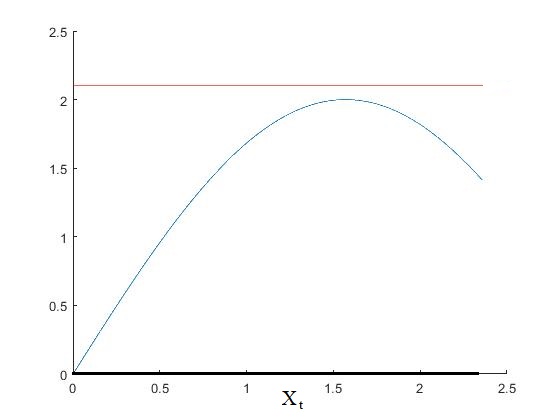}  \\
			\end{tabular}
			\caption{How the sublevel sets change with respect to the filtration induced by $\varphi$.}
			\label{filtration}
		\end{center}
	\end{figure}
	
	\begin{figure}[t]
		\begin{center}
			\includegraphics[width=4.8cm]{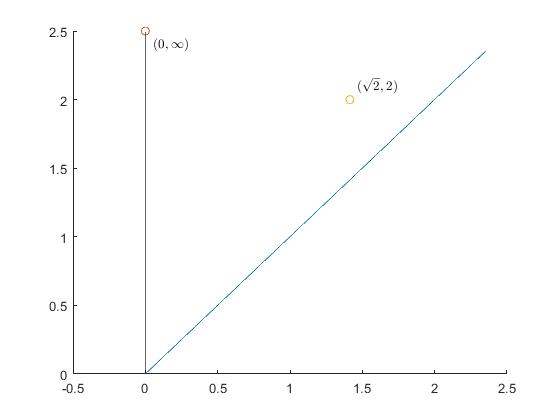}
		\end{center}
		\caption{Persistence diagram of the function $2\sin x$ on $\left[0,\frac{3}{4}\pi\right]$.
			The point $(0,\infty)$ describes the existence of a connected component that is born at zero and never dies.
			The point $(\sqrt{2},2)$ claims that there is a connected component born at $\sqrt{2}$ that dies (merges with the other one) at 2.
			The trivial points on the diagonal $u=v$ are not displayed.}
		\label{pd}
	\end{figure}
	
	\subsection{Comparing Persistence Diagrams}
	Persistence diagrams can be efficiently compared by means of a suitable metric $\dmatch$.
	In order to define it, we first define the pseudo-distance $$\delta((x,y),(x',y'))= \min\{\max\{\lvert x-x'\rvert,\lvert y-y'\rvert\},\max\{\lvert x-y\rvert/2,\lvert x'-y'\rvert/2\}\}$$
	for all $(x,y),(x',y')\in\{(x,y)\in\Rea$ with $x\leq y\}\cup\{(x,\infty)$ with $x\in\Rea\}$ by agreeing that $\infty-y=\infty, y-\infty=-\infty$ for $y\not=\infty, \infty-\infty=0, \infty/2=\infty, \lvert\pm\infty\rvert =\infty,  \min\{\infty,c\} = c,\max\{\infty, c\} = \infty$.
	
	If two persistence diagrams $D,D'$ are given, we can set
	$$\dmatch(D,D')=\inf\limits_{\sigma\in\Sigma} \sup\limits_{P\in D} \delta(P,\sigma(P))$$ where $\Sigma$ represents the set of all bijections between the multisets $D,D'$.
	
	For every degree $k$ we can now define a new pseudo-metric:
	$$D_{\mathcal{F},\varPhi}^{\mathrm{match}}(\varphi_1,\varphi_2)=\sup_{F\in\mathcal{F}} \dmatch(D_{F(\varphi_1)},D_{F(\varphi_2)})$$
	where $D_{F(\varphi_1)},D_{F(\varphi_2)}$ are the persistence diagrams at degree $k$ of the functions $F(\varphi_1),F(\varphi_2)$, respectively.
	
	In this paper we will limit ourselves to considering data represented as functions from $\Rea$ to $\Rea$, and we recall that for this kind of data persistence diagrams are non-trivial only in degree $k=0$ (i.e., when persistent homology is used to count connected components). For this reason, in the following we will always assume $k=0$.

	
	\section{Our model}
	\begin{figure}[t]
		\begin{center}
			\begin{tabular}{r c}
			\includegraphics[width=4.8cm]{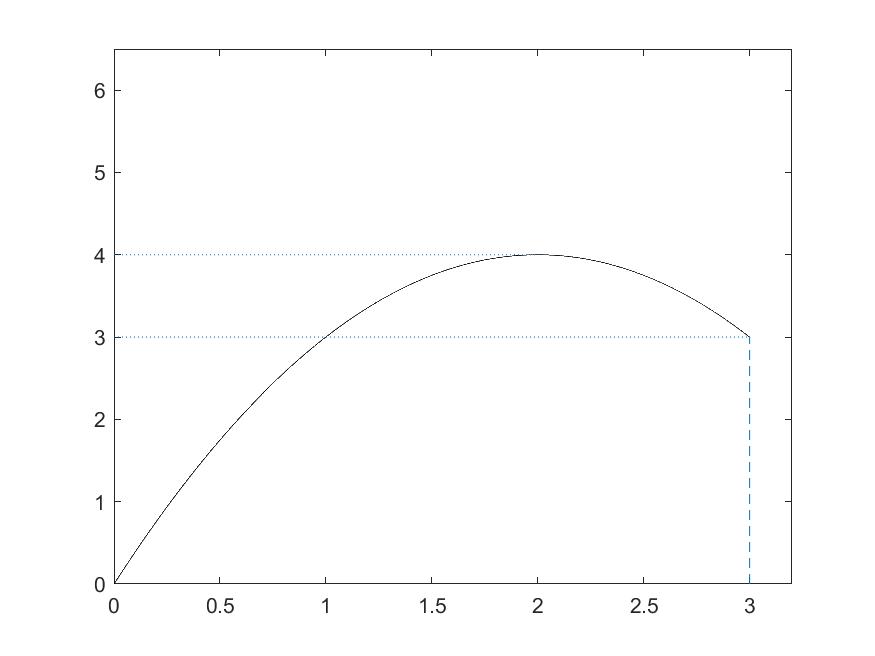} & \includegraphics[width=4.8cm]{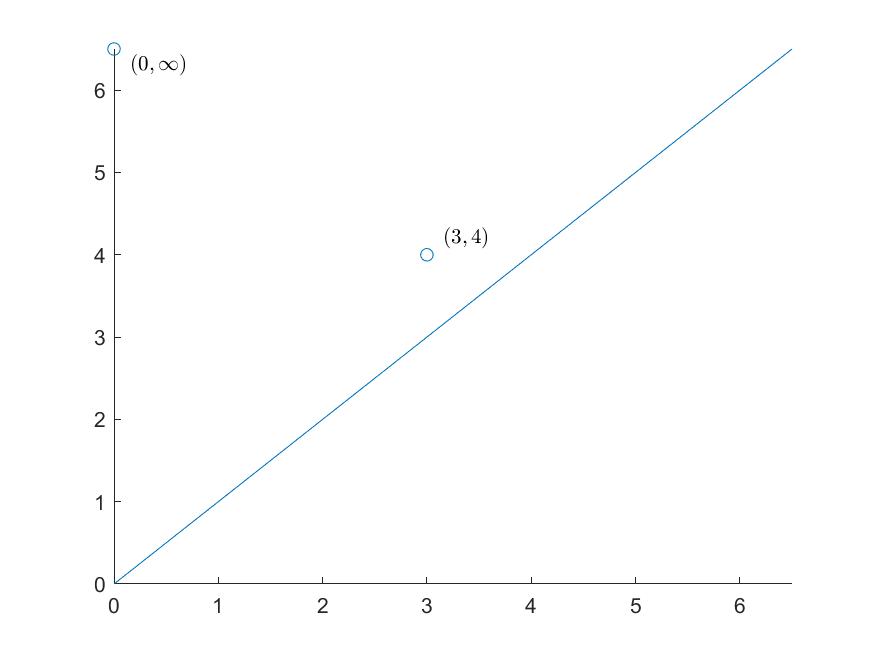} \\
			\includegraphics[width=4.8cm]{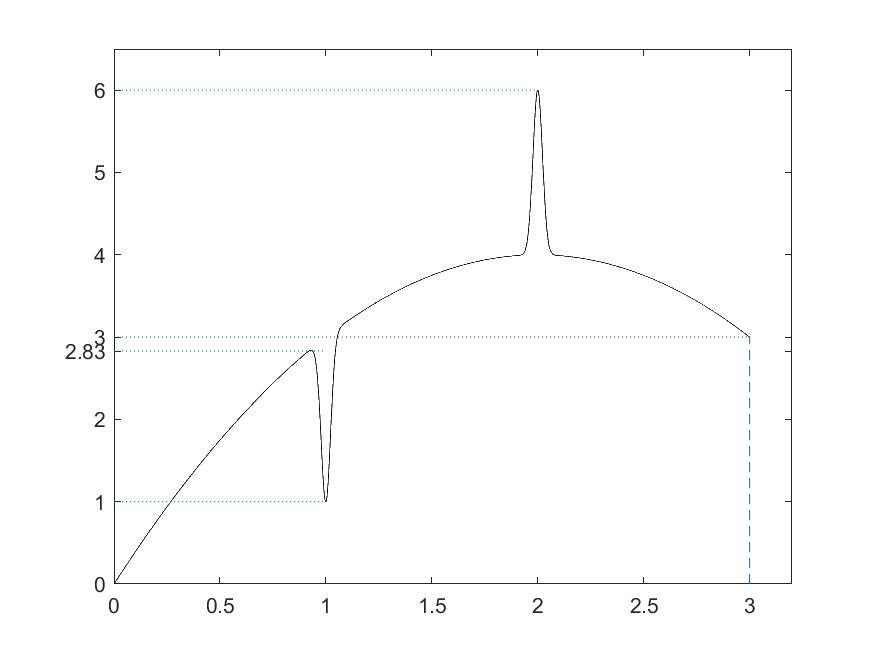} & \includegraphics[width=4.8cm]{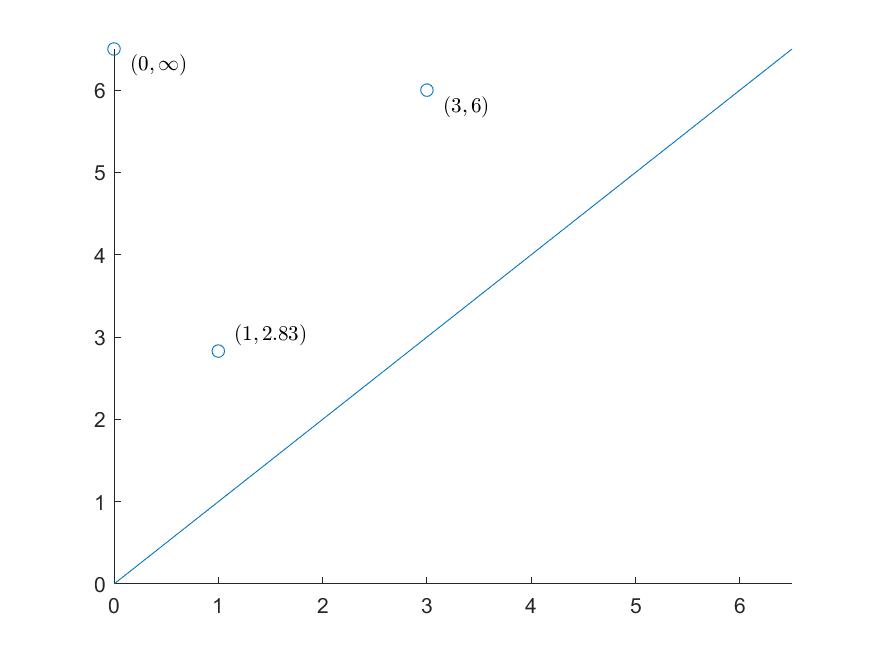}  \\
			\end{tabular}
			\caption{How drastically impulsive noise can influence persistence diagrams.}
		\end{center}
	\end{figure}
	In this paper we will be mainly interested in the set $\LipL$ of all $L$-Lipschitz functions from $\Rea$ to $\Rea$, for some fixed constant $L\in\Rea$, and in the set $C^0(\Rea)$ of all functions from $\Rea$ to $\Rea$ that are continuous with respect to the Euclidean topology.
	We will set $X=\Rea$ and consider the perception pairs $(\LipL,G)$, $(C^0(\Rea),G)$, where $G$ is the group of the Euclidean isometries of $\Rea$.
	
	We will assume our noise to be made up of a finite number of copies of a ``mother'' nonnegative continuous bump function $\psi:\Rea\to\Rea$, such that $\mathrm{supp}(\psi)\subseteq\left]-\sigma,\sigma\right[$ for some $\sigma>0$ and $\|\psi\|_\infty\leq 1$. We recall that the support of a function is the closure of the set of points where $f$ is non-vanishing.
	After fixing two positive real numbers $\eta,\beta$, the noise we will be adding is a function $R$ belonging to the space $\mathcal{R}_{\eta,\beta}$ that contains the null function and all functions of the form $\sum\limits_{i=1}^k a_i\psi(b_i(x-c_i))$, where $k$ is a positive integer and $a_i,b_i,c_i$ are real numbers such that $\lvert c_i-c_j\rvert\ge\eta$ for $i\not=j$ and $b_i\ge\beta$ for every index $i$. For any $R\in\mathcal{R}_{\eta,\beta}$ we define $S(R):=\bigcup\limits_{i=1}^k \left]c_i-\frac{\sigma}{\beta},c_i+\frac{\sigma}{\beta}\right[$ and remark that if $x\not\in S(R)$ then $R(x)=0$.
	
	Our purpose will be the one of recovering $\varphi\in\LipL$ as well as possible from the function $\hat\varphi=\varphi+R$. An example of such situation is depicted in Figure~\ref{corruptedphi}.
	
	\begin{figure}[t]
		\begin{tabular}{c c c} \includegraphics[width=4cm]{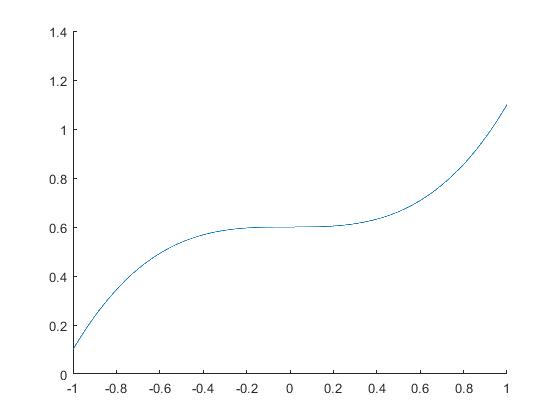} &  \includegraphics[width=4cm]{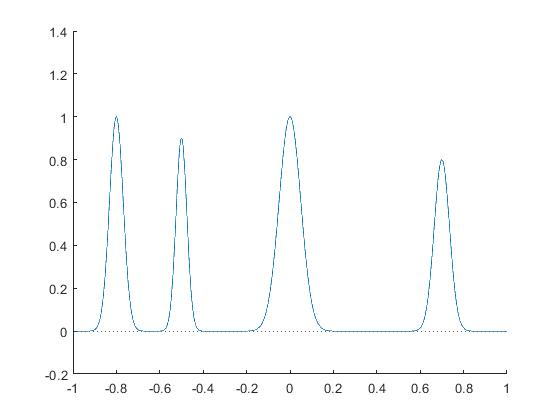} & \includegraphics[width=4cm]{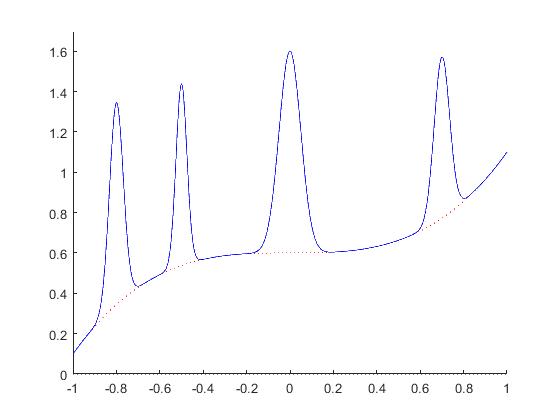}
		\end{tabular}
		\caption{In the figure on the left we have our original function $\varphi$, in the middle a noise function R and in the right figure the corrupted function $\hat\varphi:=\varphi+R$.}\label{corruptedphi}
	\end{figure}
	
	The following result will be of use.
	\begin{proposition}\label{propcomp}
		Let $F_1,F_2$ be GENEOs from $(C^0(\Rea),G)$ to $(C^0(\Rea),G)$ with respect to the trivial homomorphism $T=\mathrm{id}:G\to G$.
		Then $F_1\circ F_2$ is a GENEO from $(C^0(\Rea),G)$ to $(C^0(\Rea),G)$ with respect to $T$.
	\end{proposition}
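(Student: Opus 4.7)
The plan is to verify the two defining properties of a GENEO (group equivariance and non-expansivity) for the composition $F_1\circ F_2$, by chaining the corresponding properties of $F_1$ and $F_2$. Since $F_2$ sends $C^0(\Rea)$ to itself and $F_1$ does likewise, the composition $F_1\circ F_2$ is a well-defined map from $C^0(\Rea)$ to $C^0(\Rea)$, so the only real content is to check the two axioms.

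For equivariance with respect to $T=\mathrm{id}$, I would fix $\varphi\in C^0(\Rea)$ and $g\in G$ and compute
\[
(F_1\circ F_2)(\varphi\circ g)=F_1\bigl(F_2(\varphi\circ g)\bigr)=F_1\bigl(F_2(\varphi)\circ T(g)\bigr)=F_1\bigl(F_2(\varphi)\bigr)\circ T(g)=\bigl((F_1\circ F_2)(\varphi)\bigr)\circ T(g),
\]
applying the equivariance of $F_2$ in the second equality and that of $F_1$ (to the function $F_2(\varphi)\in C^0(\Rea)$) in the third. The fact that $T$ is the identity ensures the intermediate object $F_2(\varphi)\circ T(g)$ is still in $C^0(\Rea)$, so $F_1$ can be legitimately applied to it.

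For non-expansivity, I would pick $\varphi_1,\varphi_2\in C^0(\Rea)$ and apply the non-expansivity of $F_1$ followed by that of $F_2$:
\[
\bigl\|(F_1\circ F_2)(\varphi_1)-(F_1\circ F_2)(\varphi_2)\bigr\|_\infty\leq \bigl\|F_2(\varphi_1)-F_2(\varphi_2)\bigr\|_\infty\leq \|\varphi_1-\varphi_2\|_\infty.
\]
Here the pseudo-metric on $C^0(\Rea)$ is the one induced by $\|\cdot\|_\infty$, so the two inequalities are exactly the non-expansivity statements for $F_1$ and $F_2$ respectively.

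There is essentially no obstacle: the only subtlety worth flagging is that the argument crucially uses $T=\mathrm{id}$ (so that the same group $G$ acts on both sides and the intermediate space coincides with the input space of $F_1$); for a non-trivial homomorphism one would need a compatibility condition between the target of the first operator and the source of the second. Since the hypothesis explicitly restricts to $T=\mathrm{id}$, this is immediate and the proposition follows.
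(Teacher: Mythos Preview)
Your proof is correct and follows essentially the same approach as the paper: verify equivariance by chaining $F_2$'s equivariance followed by $F_1$'s, and verify non-expansivity by chaining the two $1$-Lipschitz bounds. The paper's argument is identical up to notation (it writes $g$ in place of $T(g)$ since $T=\mathrm{id}$).
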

	
	\begin{proof}
		For every $\varphi \in C^0(\Rea),g \in G$ we have that
		\begin{align*}
			F_1 \circ F_2(\varphi \circ g)&=F_1(F_2(\varphi \circ g)) \\
			&=F_1(F_2(\varphi) \circ g) \\
			&=F_1(F_2(\varphi)) \circ g \\
			&=(F_1 \circ F_2)(\varphi) \circ g.
		\end{align*}
		Therefore, $F_1\circ F_2$ is $G$-equivariant.
		Moreover, for any $\varphi_1,\varphi_2 \in C^0(\Rea)$
		\begin{align*}
			D_{C^0(\Rea)}(F_1 \circ F_2(\varphi_1)),F_1\circ F_2(\varphi_2))&=D_{C^0(\Rea)}(F_1(F_2(\varphi_1)),F_1(F_2(\varphi_2))) \\
			&\le D_{C^0(\Rea)}(F_2(\varphi_1)),F_2(\varphi_2)) \\
			&\le D_{C^0(\Rea)}(\varphi_1,\varphi_2).
		\end{align*}
		It follows that $F_1\circ F_2$ is non-expansive.
	\end{proof}
	
	\section{Cutting off the noise by GENEOs}
	
	We start by introducing two families of GENEOs from $(C^0(\Rea),G)$ to $(C^0(\Rea),G)$ with respect to the identical homomorphism.
	
	\begin{definition}
		Let $\varphi \in \LipL$ and $\varepsilon>0$. For all $x\in\Rea$ we define:
		\begin{enumerate}
			\item $F^\varepsilon(\varphi)(x)=\max (\varphi(x-\varepsilon),\varphi(x+\varepsilon))$;
			\item $F_\varepsilon(\varphi)(x)=\min (\varphi(x-\varepsilon),\varphi(x+\varepsilon))$.
		\end{enumerate}
	\end{definition}
	
	
	\begin{proposition}
		The maps $F^\varepsilon$ and $F_\varepsilon$ are GENEOs from $(C^0(\Rea),G)$ to $(C^0(\Rea),G)$ with respect to the identical homomorphism.
	\end{proposition}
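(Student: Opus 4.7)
The plan is to verify the three items that together establish the GENEO property for the two families: well-definedness as maps into $C^0(\Rea)$, equivariance with respect to the identity homomorphism, and non-expansivity with respect to the sup-norm. I will present the argument for $F^\varepsilon$ in detail; the proof for $F_\varepsilon$ is identical after replacing max with min.

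\emph{Well-definedness.} For any $\varphi\in C^0(\Rea)$, both translates $x\mapsto\varphi(x-\varepsilon)$ and $x\mapsto\varphi(x+\varepsilon)$ are continuous. Since the maximum (resp.\ minimum) of two continuous real-valued functions is continuous, $F^\varepsilon(\varphi)$ (resp.\ $F_\varepsilon(\varphi)$) lies in $C^0(\Rea)$.

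\emph{Equivariance.} Recall that $G$ is the group of Euclidean isometries of $\Rea$, so each $g\in G$ has the form $g(x)=\pm x+c$ for some $c\in\Rea$. I will split into the two cases. If $g(x)=x+c$, then a direct computation yields
\begin{align*}
F^\varepsilon(\varphi\circ g)(x) &= \max\bigl(\varphi(x-\varepsilon+c),\varphi(x+\varepsilon+c)\bigr) \\
&= F^\varepsilon(\varphi)(x+c) = \bigl(F^\varepsilon(\varphi)\circ g\bigr)(x).
\end{align*}
If instead $g(x)=-x+c$, then
\begin{align*}
F^\varepsilon(\varphi\circ g)(x) &= \max\bigl(\varphi(-x+\varepsilon+c),\varphi(-x-\varepsilon+c)\bigr), \\
\bigl(F^\varepsilon(\varphi)\circ g\bigr)(x) &= \max\bigl(\varphi(-x+c-\varepsilon),\varphi(-x+c+\varepsilon)\bigr),
\end{align*}
and the two right-hand sides coincide because the maximum is symmetric in its arguments. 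The same symmetry handles $F_\varepsilon$ under reflections, so equivariance with respect to the identity homomorphism holds in both cases.

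\emph{Non-expansivity.} The key pointwise inequality is the elementary bound $|\max(a,b)-\max(a',b')|\le\max(|a-a'|,|b-b'|)$ (and analogously for min), which is immediate from the case analysis on which of $a,b$ and $a',b'$ achieve the maxima. Applying this with $a=\varphi_1(x-\varepsilon)$, $b=\varphi_1(x+\varepsilon)$, $a'=\varphi_2(x-\varepsilon)$, $b'=\varphi_2(x+\varepsilon)$ gives
$$\bigl|F^\varepsilon(\varphi_1)(x)-F^\varepsilon(\varphi_2)(x)\bigr|\le\max\bigl(|\varphi_1(x-\varepsilon)-\varphi_2(x-\varepsilon)|,|\varphi_1(x+\varepsilon)-\varphi_2(x+\varepsilon)|\bigr)\le\|\varphi_1-\varphi_2\|_\infty,$$
and taking the supremum over $x\in\Rea$ yields the desired inequality $D_{C^0(\Rea)}(F^\varepsilon(\varphi_1),F^\varepsilon(\varphi_2))\le D_{C^0(\Rea)}(\varphi_1,\varphi_2)$.

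There is no substantial obstacle here: the whole argument is a routine verification. The only point that requires a brief remark is the equivariance under reflections, where one has to observe that the sign change $\varepsilon\leftrightarrow-\varepsilon$ only swaps the two entries in the max/min and hence leaves the operator value unchanged.
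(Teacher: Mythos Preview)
Your proof is correct and follows essentially the same direct-verification approach as the paper: check equivariance on translations and reflections separately, then bound $|\max(a,b)-\max(a',b')|$ pointwise for non-expansivity. The only minor difference is that the paper handles $F_\varepsilon$ via the identity $F_\varepsilon(\varphi)=-F^\varepsilon(-\varphi)$ together with the earlier proposition on composition of GENEOs, whereas you simply repeat the argument with $\min$ in place of $\max$; both routes are equally valid and equally short.
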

	\begin{proof}
		
		We start by proving that $F^\varepsilon$ is $G$-equivariant.
		
		
		Let $g$ be the translation $x \mapsto x+k$ with $k \in \Rea$, then
		\begin{align*}
			F^\varepsilon(\varphi \circ g)
			&=\max\{\varphi((x+k)-\varepsilon),\varphi((x+k)+\varepsilon)\} \\
			&=\max\{\varphi((x-\varepsilon)+k),\varphi((x+\varepsilon)+k)\} \\
			&=F^\varepsilon(\varphi) \circ g.
		\end{align*}
		Let $g$ be the symmetry $x \mapsto -x$, then
		\begin{align*}
			F^\varepsilon(\varphi \circ g)
			&=\max\{\varphi((-x)-\varepsilon),\varphi((-x)+\varepsilon)\} \\
			&=\max\{\varphi(-(x+\varepsilon)),\varphi(-(x-\varepsilon))\} \\
			&=\max\{\varphi(x+\varepsilon),\varphi(x-\varepsilon)\} \circ g \\
			&=F^\varepsilon(\varphi) \circ g.
		\end{align*}
		Since every isometry in $G$ can be written as the composition of a symmetry and a translation, our statement follows.
		
		Furthermore, for any $x\in\Rea$
		\begin{align*}
			&\lvert F^\varepsilon(\varphi_1)(x)-F^\varepsilon(\varphi_2)(x)\rvert= \\
			&=\lvert\max\{\varphi_1(x-\varepsilon),\varphi_1(x+\varepsilon)\}-\max\{\varphi_2(x-\varepsilon),\varphi_2(x+\varepsilon)\}\rvert \\
			&\le\max\{\lvert\varphi_1(x-\varepsilon)-\varphi_2(x-\varepsilon)\rvert,\lvert\varphi_1(x+\varepsilon)-\varphi_2(x+\varepsilon)\rvert\} \\
			&\le\max\{\|\varphi_1-\varphi_2\|_\infty,\|\varphi_1-\varphi_2\|_\infty\} \\
			&=\|\varphi_1-\varphi_2\|_\infty.
		\end{align*}
		It follows that $\|F^\varepsilon(\varphi_1)-F^\varepsilon(\varphi_2)\|_\infty\le\|\varphi_1-\varphi_2\|_\infty$, and hence
		$F^\varepsilon$ is non-expansive.
		
		We observe that $\min\{a,b\}=-\max\{-a,-b\}$, and hence
		$F_\varepsilon(\varphi)=-F^\varepsilon(-\varphi)$ for every $\varphi\in C^0(\Rea)$. Moreover, the map taking $\varphi$ to $-\varphi$ is a GENEO from $(C^0(\Rea),G)$ to $(C^0(\Rea),G)$ with respect to the identical homomorphism.
		Therefore, Proposition~\ref{propcomp}
		guarantees that $G$-equivariance and non-expansivity also hold for $F_\varepsilon$.
	\end{proof}
	
	Since Proposition~\ref{propcomp} shows that the composition of GENEOs is still a GENEO, the operator $F^{\delta} \circ F_\varepsilon$ is a GENEO from $(C^0(\Rea),G)$ to $(C^0(\Rea),G)$ with respect to the identical homomorphism.
	
	We want now to prove that if a function $\hat\varphi$ is obtained by adding impulsive noise to a function $\varphi$ then the value of
	$$\|F^\delta \circ F_\varepsilon (\hat\varphi)-\varphi\|_\infty$$ is bounded, and possibly small, provided that $\delta$ and $\varepsilon$ are suitably chosen. The main idea is that the operator $F_\varepsilon$ cuts the noise ``directed upwards'' and $F^\delta$ cuts the noise ``directed downwards''.
	
	
	In order to proceed, we need two lemmas.
	\begin{lemma}
		\label{lemma1}
		Let $R\in C^0(\Rea,\Rea)$, $\varphi\in \LipL$ for some $L\in\Rea$, and set $\hat\varphi:=\varphi+R$.
		Then for any $\varepsilon>0$ and $\delta>0$
		\begin{description}
			\item[$\mathrm{i)}$] $-L\varepsilon+F_\varepsilon(R)\leq F_\varepsilon(\hat\varphi)-\varphi \leq L\varepsilon+F_\varepsilon(R)$;
			\item[$\mathrm{ii)}$] $-L\delta+F^\delta(R)\leq F^\delta(\hat\varphi)-\varphi \leq L\delta+F^\delta(R)$;
			\item[$\mathrm{iii)}$] $-L(\delta+\varepsilon)+F^\delta\circ F_\varepsilon(R)\leq F^\delta\circ F_\varepsilon(\hat\varphi)-\varphi \leq L(\delta+\varepsilon)+F^\delta\circ F_\varepsilon(R)$.
		\end{description}
	\end{lemma}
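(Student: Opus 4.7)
The plan is to prove the three inequalities by unfolding the definitions of $F_\varepsilon$ and $F^\delta$, using the $L$-Lipschitz property of $\varphi$ to localize, and then exploiting the fact that a constant commutes with $\min$ and $\max$ while both preserve pointwise order. Parts (i) and (ii) are completely analogous (swap $\min\leftrightarrow\max$ and use that $F^\delta(-u)=-F_\delta(u)$), so I would only write out (i) and remark the symmetric argument for (ii); then I would derive (iii) by a single combined sign-selection argument rather than by iterating (i).

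For (i), at each $x\in\Rea$ I would write
\[
F_\varepsilon(\hat\varphi)(x)=\min_{t\in\{-1,+1\}}\bigl(\varphi(x+t\varepsilon)+R(x+t\varepsilon)\bigr),
\]
and use $|\varphi(x+t\varepsilon)-\varphi(x)|\le L\varepsilon$ to sandwich the summand $\varphi(x+t\varepsilon)+R(x+t\varepsilon)$ between $\varphi(x)-L\varepsilon+R(x+t\varepsilon)$ and $\varphi(x)+L\varepsilon+R(x+t\varepsilon)$ uniformly in $t$. Taking $\min_t$ through these inequalities and factoring the $x$-indexed constants $\varphi(x)\pm L\varepsilon$ out of the minimum yields $\varphi(x)-L\varepsilon+F_\varepsilon(R)(x)\le F_\varepsilon(\hat\varphi)(x)\le\varphi(x)+L\varepsilon+F_\varepsilon(R)(x)$, which is (i). For (ii) one replaces $\min$ by $\max$ and $\varepsilon$ by $\delta$.

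For (iii) I would avoid composing (i) with (ii), since a naive pointwise application of $F^\delta$ to the bounds of (i) loses control because $F_\varepsilon(R)$ would be sampled at two different shifted points. Instead I would write the composition explicitly as
\[
F^\delta\!\circ\! F_\varepsilon(\hat\varphi)(x)=\max_{s\in\{-1,+1\}}\min_{t\in\{-1,+1\}}\hat\varphi(x+s\delta+t\varepsilon),
\]
and observe that for every choice of signs $|\varphi(x+s\delta+t\varepsilon)-\varphi(x)|\le L(\delta+\varepsilon)$. Hence the integrand is sandwiched between $\varphi(x)\pm L(\delta+\varepsilon)+R(x+s\delta+t\varepsilon)$ \emph{with the same constant regardless of $s,t$}. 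Applying $\min_t$ and then $\max_s$ and pulling the constant $\varphi(x)\pm L(\delta+\varepsilon)$ outside both operations gives exactly $\varphi(x)\pm L(\delta+\varepsilon)+F^\delta\!\circ\! F_\varepsilon(R)(x)$.

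The only delicate point is this last manipulation: the reason the bound factors so cleanly is that $L(\delta+\varepsilon)$ is independent of $(s,t)$, so it passes through both $\min_t$ and $\max_s$ as an additive constant; had one instead iterated (i) and then (ii) naively, the inner $\varepsilon$-shift would appear inside the outer $\max$ and produce a messier expression. I expect this to be the only subtle step; everything else is a routine check that $\min$ and $\max$ preserve inequalities and commute with addition of constants.
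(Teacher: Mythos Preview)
Your argument is correct and, for (i) and (ii), essentially identical to the paper's. For (iii) your direct $\max_s\min_t$ expansion is also fine, but your stated reason for avoiding iteration is misplaced: the paper \emph{does} iterate, applying (i) at the shifted points $x\pm\delta$ to get $F_\varepsilon(\hat\varphi)(x\pm\delta)\le\varphi(x\pm\delta)+L\varepsilon+F_\varepsilon(R)(x\pm\delta)$, and then uses the Lipschitz bound $\varphi(x\pm\delta)\le\varphi(x)+L\delta$ to make the additive constant sign-independent before taking the outer $\max$. At that point the two $F_\varepsilon(R)(x\pm\delta)$ terms recombine exactly as $F^\delta\circ F_\varepsilon(R)(x)$, so no control is lost. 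Your one-shot version and the paper's two-step version are the same computation written in a different order.
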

	\begin{proof}
		Since $\varphi$ is Lipschitz of constant $L$ we have that
		for any value $x\in\Rea$
		$\lvert\varphi(x-\varepsilon)-\varphi(x)\rvert\leq L\varepsilon$ and $\lvert\varphi(x+\varepsilon)-\varphi(x)\rvert\leq L\varepsilon$. Therefore,
		\begin{align*}
			F_\varepsilon(\hat\varphi)(x)&=F_\varepsilon(\varphi+R)(x) \\
			&=\min\{\varphi(x-\varepsilon)+R(x-\varepsilon),\varphi(x+\varepsilon)+R(x+\varepsilon)\} \\
			&\leq\min\{\varphi(x)+L\varepsilon+R(x-\varepsilon),\varphi(x)+L\varepsilon+R(x+\varepsilon)\} \\
			&=\varphi(x)+L\varepsilon+\min\{R(x-\varepsilon),R(x+\varepsilon)\} \\
			&=\varphi(x)+L\varepsilon+F_\varepsilon(R)(x).
		\end{align*}
		Analogously, $F_\varepsilon(\hat\varphi)(x)\geq\varphi(x)- L\varepsilon+F_\varepsilon(R)(x)$.
		The same steps applied to $F^\delta$ yield the second statement of the lemma.
		As for the last claim we can see that:
		\begin{align*}
			&F^\delta\circ F_\varepsilon(\hat\varphi)(x) \\
			&=\max\{F_\varepsilon(\hat\varphi)(x-\delta),F_\varepsilon(\hat\varphi)(x+\delta)\} \\
			&\leq\max\{\varphi(x-\delta)+L\varepsilon+F_\varepsilon(R)(x-\delta),\varphi(x+\delta)+L\varepsilon+F_\varepsilon(R)(x+\delta)\} \\
			&\leq\max\{\varphi(x)+L\delta+L\varepsilon+F_\varepsilon(R)(x-\delta),\varphi(x)+L\delta+L\varepsilon+F_\varepsilon(R)(x+\delta)\} \\
			&=\varphi(x)+L\delta+L\varepsilon+\max\{F_\varepsilon(R)(x-\delta),F_\varepsilon(R)(x+\delta)\} \\
			&=\varphi(x)+L\delta+L\varepsilon+F^\delta\circ F_\varepsilon(R)(x).
		\end{align*}
		Analogously, we can prove the lower bound.
	\end{proof}
	Henceforth we will assume that any summation on an empty set of indexes is the null function.
	\begin{lemma}
		\label{lemma2}
		Let $R\in\mathcal{R}_{\eta,\beta}$ and $\lambda\ge\frac{\sigma}{\beta}$.
		If $\lambda\le\rho\le\frac{\eta}{2}-\lambda$ then
		\begin{description}
			\item[$\mathrm{a)}$] $F_\rho(R)(x)=\sum\limits_{a_i<0} \left[a_i\psi(b_i(x-\rho-c_i))+a_i\psi(b_i(x+\rho-c_i))\right]\leq0$
			\item[$\mathrm{b)}$] $F^\rho(R)(x)=\sum\limits_{a_i>0} \left[a_i\psi(b_i(x-\rho-c_i))+a_i\psi(b_i(x+\rho-c_i))\right]\geq0$
		\end{description}
		for all $x\in\Rea$. Moreover $F_\rho(R),F^\rho(R)(x)\in\mathcal{R}_{2\lambda,\beta}$.
	\end{lemma}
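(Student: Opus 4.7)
My plan is to exploit the hypotheses on $\rho$ and $\lambda$ to show that the supports of the translated bumps $x\mapsto\psi(b_i(x\pm\rho-c_i))$ are pairwise disjoint as the index $i$ and the sign vary. Once this is established, the minimum (resp.\ maximum) defining $F_\rho(R)(x)$ (resp.\ $F^\rho(R)(x)$) will reduce at each $x$ to a pointwise selection of at most one bump term, and (a), (b), and the final $\mathcal{R}_{2\lambda,\beta}$-membership assertion will all drop out of a brief case analysis.

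First I would set, for each $i$, $S_i^{\pm}:=\,]c_i\pm\rho-\sigma/b_i,\, c_i\pm\rho+\sigma/b_i[$, which are exactly the supports of $x\mapsto\psi(b_i(x\mp\rho-c_i))$. Since $b_i\ge\beta$ and $\lambda\ge\sigma/\beta$, each $S_i^{\pm}$ is contained in a closed interval of radius $\lambda$ centred at $c_i\pm\rho$. The crucial step is then to verify that any two such centres lie at distance at least $2\lambda$ apart: for the same index but opposite signs, the distance is $2\rho\ge 2\lambda$; for different indices and the same sign, it is $|c_i-c_j|\ge\eta\ge 2\lambda+2\rho\ge 2\lambda$; and for different indices and opposite signs, it is at least $|c_i-c_j|-2\rho\ge\eta-2\rho\ge 2\lambda$, where the final inequality uses the hypothesis $\rho\le\eta/2-\lambda$. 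This is the one place where every hypothesis of the lemma enters, and it is the only mildly delicate point of the argument.

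Having established pairwise disjointness of the $S_i^{\pm}$, I would fix an arbitrary $x$ and treat two cases. If $x$ lies in none of the $S_i^{\pm}$, then both $R(x-\rho)$ and $R(x+\rho)$ vanish, so $F_\rho(R)(x)=0$ and the right-hand side of (a) also vanishes. If instead $x$ lies in some $S_{i_0}^{+}$ (the case $S_{i_0}^{-}$ being symmetric), then $R(x-\rho)=a_{i_0}\psi(b_{i_0}(x-\rho-c_{i_0}))$ while $R(x+\rho)=0$, so $F_\rho(R)(x)=\min\{a_{i_0}\psi(\,\cdot\,),0\}$. Since $\psi\ge 0$, this minimum equals $a_{i_0}\psi(\,\cdot\,)$ when $a_{i_0}<0$ and equals $0$ otherwise, matching the claimed formula and certifying its nonpositivity. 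The same argument with $\max$ in place of $\min$ and $a_i>0$ in place of $a_i<0$ gives (b).

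For the closing assertion, I would simply observe that the sums displayed in (a) and (b) are already written in the form defining $\mathcal{R}_{2\lambda,\beta}$: the new centres are among the points $c_i\pm\rho$, which the centre-separation argument above has shown to be pairwise at distance $\ge 2\lambda$, and the dilation parameters are the original $b_i\ge\beta$. When no term survives (because all $a_i$ carry the wrong sign, or $R\equiv 0$ to begin with), the sum is the null function by our convention, which also belongs to $\mathcal{R}_{2\lambda,\beta}$. The main obstacle, as noted, is the opposite-sign subcase of the centre-separation step, which is exactly where the upper bound $\rho\le\eta/2-\lambda$ is used.
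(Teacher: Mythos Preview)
Your proposal is correct and follows essentially the same approach as the paper: both arguments hinge on showing that the intervals centred at the translated bump centres $c_i\pm\rho$ (of radius at most $\lambda$) are pairwise disjoint, and then reading off the value of $F_\rho(R)$ by a case analysis on which (if any) such interval contains $x$. The only cosmetic difference is that the paper first argues separately that at least one of $R(x-\rho),R(x+\rho)$ vanishes and then proves disjointness of the enlarged intervals $I_i^\pm$, whereas you obtain both facts in one stroke from the centre-separation inequality; the verification of the $\mathcal{R}_{2\lambda,\beta}$ membership is identical.
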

	\begin{proof}
		We will suppose without loss of generality that $c_i<c_{i+1}$ for all $i=1,\dots,k-1$ and $a_i\not=0$ for all indexes $i$.
		
		We want to show that at least one of $x-\rho$ and $x+\rho$ must always be outside of $\bigcup\limits_{i=1}^k \left]c_i-\lambda,c_i+\lambda\right[$.
		If by contradiction both $x-\rho$ and $x+\rho$ were in the same $\left]c_i-\lambda,c_i+\lambda\right[$ then we would have $\rho<\lambda$, against our hypotheses.
		Moreover $x-\rho$ and $x+\rho$ cannot belong to different intervals $\left]c_i-\lambda,c_i+\lambda\right[$ and $\left]c_j-\lambda,c_j+\lambda\right[$ for $i<j$:
		if by contradiction $x-\rho\in\left]c_i-\lambda,c_i+\lambda\right[$
		and $x+\rho\in\left]c_j-\lambda,c_j+\lambda\right[$ for some $i<j$ then $2\rho=x+\rho-(x-\rho)>
		(c_j-\lambda)-(c_i+\lambda)\geq
		c_j-c_i-2\lambda\geq
		\eta-2\lambda$, and hence we would have $\rho>\frac{\eta}{2}-\lambda$, against our hypotheses.
		
		Since $\left]c_i-\frac{\sigma}{\beta},c_i+\frac{\sigma}{\beta}\right[\subseteq\left]c_i-\lambda,c_i+\lambda\right[$, it follows that at least one among the values $R(x-\rho)$ and $R(x+\rho)$ must always be zero. Let us now set $I^-_i:=\left](c_i-\rho)-\lambda,(c_i-\rho)+\lambda\right[$ and $I^+_i:=\left](c_i+\rho)-\lambda,(c_i+\rho)+\lambda\right[$. These two intervals must be disjoint, since $(c_i+\rho-\lambda)-(c_i-\rho+\lambda)=2\rho-2\lambda\ge0$.
		
		Let us now consider $\{I^-_1,I^+_1,\ldots,I^-_k,I^+_k\}$. We will now prove that any two distinct elements from this set must be disjoint. Since we have just proven that $I^-_i\cap I^+_i=\emptyset$ for $i=1,\dots,k$, the following holds in the case $i<j$:
		\begin{enumerate}
			\item $I^+_i\cap I^+_j=\emptyset$, since $c_j+\rho-\lambda-(c_i+\rho+\lambda)\geq c_j-c_i-2\lambda\ge\eta-2\lambda\ge0$;
			\item $I^-_i\cap I^-_j=\emptyset$, since $c_j-\rho-\lambda-(c_i-\rho+\lambda)\geq c_j-c_i-2\lambda\ge\eta-2\lambda\ge0$;
			\item $I^+_i\cap I^-_j=\emptyset$, since $c_j-\rho-\lambda-(c_i+\rho+\lambda)\geq c_j-c_i-2\rho-2\lambda\ge\eta-2\left(\frac{\eta}{2}-\lambda\right)-2\lambda=0$;
			\item $I^+_j\cap I^-_i$, since $c_j+\rho-\lambda-(c_i-\rho+\lambda)\geq c_j-c_i+2\rho-2\lambda\ge\eta+2\lambda-2\lambda\ge0$.
		\end{enumerate}
		Let us now fix $k\in\{1,\dots,k\}$.
		
		Since $\mathrm{supp}(a_k\psi(b_k(x-c_k)))\subseteq\left]c_k-\lambda,c_k+\lambda\right[$, then $\mathrm{supp}(a_k\psi(b_k(x-\rho-c_k)))\subseteq I^-_k$ and $\mathrm{supp}(a_k\psi(b_k(x+\rho-c_k)))\subseteq I^+_k$.
		
		Hence, for any $x\in\Rea$ we have that
		\begin{align*}
			F_\rho(R)(x)&=\min\{R(x-\rho),R(x+\rho)\} \\
			&=
			\begin{cases}
				0 &\mbox{if } x\notin\bigcup\limits_{i=1}^k (I_i^-\cup I_i^+) \\
				a_j\psi(b_j(x+\rho-c_j)) &\mbox{if } x\in I^-_j \text{ and } a_j<0 \\
				a_j\psi(b_j(x-\rho-c_j)) &\mbox{if } x\in I^+_j \text{ and } a_j<0.
			\end{cases}
		\end{align*}
		This means that $$F_\rho(R)(x)=\sum\limits_{a_i<0} \left[a_i\psi(b_i(x-\rho-c_i))+a_i\psi(b_i(x+\rho-c_i))\right]\leq0.$$
		Now, given $c_i+\rho$ and $c_j-\rho$ centers of bumps of $F_\rho(R)$, we have that $$\lvert(c_i+\rho)-(c_j-\rho)\rvert\geq\min\{\eta-2\rho,2\rho\}\geq2\lambda.$$ It follows that $F_\rho(R)\in\mathcal{R}_{2\lambda,\beta}$.
		
		By noticing that $F^\rho(R)=-F_\rho(-R)$ we get the second part of the thesis.
	\end{proof}
	
	Let us remark that, in particular, if $\lambda=2\frac{\sigma}{\beta}$ then $F_\rho(R),F^\rho(R)\in\mathcal{R}_{4\frac{\sigma}{\beta},\beta}$.
	We observe that in order to exist a $\rho$ that satisfies the hypotheses of Lemma~\ref{lemma2}, the inequality $\eta\ge4\lambda$ must hold.
	
	We are now actually ready to prove a key result in our paper.
	
	\begin{theorem}
		\label{Upper Bound}
		Given $\theta,\beta,L\in\Rea$, let ${\bar R}\in\mathcal{R}_{\theta,\beta}$, $\varphi\in \LipL$ and set $\hat\varphi:=\varphi+{\bar R}$.
		If $2\frac{\sigma}{\beta}\le\varepsilon\le\frac{\theta}{2}-2\frac{\sigma}{\beta}$ then for any $\frac{\sigma}{\beta}\le\delta\le\frac{1}{2}\min\{\theta-2\varepsilon,2\varepsilon\}-\frac{\sigma}{\beta}$ the following inequality holds:
		$$\|F^{\delta} \circ F_\varepsilon (\hat\varphi)-\varphi\|_\infty\leq L(\varepsilon+\delta).$$
	\end{theorem}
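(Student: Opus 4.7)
The plan is to combine Lemma~\ref{lemma1}(iii) with two successive applications of Lemma~\ref{lemma2} in order to reduce the whole statement to showing that $F^\delta\circ F_\varepsilon(\bar R)$ vanishes identically. Once this is established, Lemma~\ref{lemma1}(iii) (applied with $R=\bar R$) immediately sandwiches $F^\delta\circ F_\varepsilon(\hat\varphi)-\varphi$ between $-L(\delta+\varepsilon)$ and $L(\delta+\varepsilon)$, which is exactly the desired bound.

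The first step is to apply Lemma~\ref{lemma2} to $\bar R\in\mathcal{R}_{\theta,\beta}$ with the choice $\lambda=2\sigma/\beta$ and $\rho=\varepsilon$. The hypotheses $2\sigma/\beta\le\varepsilon\le\theta/2-2\sigma/\beta$ are precisely what part a) of that lemma requires, so I conclude that
\[
F_\varepsilon(\bar R)(x)=\sum_{a_i<0}\bigl[a_i\psi(b_i(x-\varepsilon-c_i))+a_i\psi(b_i(x+\varepsilon-c_i))\bigr]\le 0,
\]
and, tracking the spacing estimate $|(c_i+\varepsilon)-(c_j-\varepsilon)|\ge\min\{\theta-2\varepsilon,2\varepsilon\}$ given inside the proof of Lemma~\ref{lemma2}, I obtain the sharper membership $F_\varepsilon(\bar R)\in\mathcal{R}_{\min\{\theta-2\varepsilon,2\varepsilon\},\beta}$ (rather than settling for the coarser $\mathcal{R}_{4\sigma/\beta,\beta}$). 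This refinement is the real crux: one must look inside the proof of Lemma~\ref{lemma2}, not only its statement.

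The second step is to apply Lemma~\ref{lemma2} again, this time to $F_\varepsilon(\bar R)\in\mathcal{R}_{\min\{\theta-2\varepsilon,2\varepsilon\},\beta}$, with $\lambda=\sigma/\beta$ and $\rho=\delta$. The hypothesis on $\delta$, namely $\sigma/\beta\le\delta\le\tfrac12\min\{\theta-2\varepsilon,2\varepsilon\}-\sigma/\beta$, matches the admissibility condition of the lemma exactly. Part b) then yields
\[
F^\delta\bigl(F_\varepsilon(\bar R)\bigr)(x)=\sum_{a'_i>0}\bigl[a'_i\psi(b'_i(x-\delta-c'_i))+a'_i\psi(b'_i(x+\delta-c'_i))\bigr],
\]
where $a'_i,b'_i,c'_i$ parametrise the bumps of $F_\varepsilon(\bar R)$. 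But by the first step all coefficients $a'_i$ are strictly negative, so this sum ranges over the empty set of indices: applying the convention recalled just before Lemma~\ref{lemma2}, we conclude $F^\delta\circ F_\varepsilon(\bar R)\equiv 0$.

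Plugging this vanishing into Lemma~\ref{lemma1}(iii) gives $-L(\delta+\varepsilon)\le F^\delta\circ F_\varepsilon(\hat\varphi)-\varphi\le L(\delta+\varepsilon)$, and taking the sup-norm finishes the proof. The only delicate point I anticipate is the bookkeeping in the second step: one must verify that the structural parameter of $F_\varepsilon(\bar R)$ as an element of $\mathcal{R}_{\cdot,\beta}$ is at least $\min\{\theta-2\varepsilon,2\varepsilon\}$ (not merely $4\sigma/\beta$) so that the interval of admissible $\delta$'s in the hypothesis is non-empty and correctly aligned with Lemma~\ref{lemma2}; everything else is a mechanical chaining of the two lemmas.
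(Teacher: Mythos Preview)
Your proof is correct and follows essentially the same route as the paper: reduce via Lemma~\ref{lemma1}(iii) to showing $F^\delta\circ F_\varepsilon(\bar R)\equiv 0$, and obtain the latter by two successive applications of Lemma~\ref{lemma2}. Your use of the sharper spacing $F_\varepsilon(\bar R)\in\mathcal{R}_{\min\{\theta-2\varepsilon,\,2\varepsilon\},\,\beta}$ (extracted from inside the proof of Lemma~\ref{lemma2}) for the second application is in fact more careful than the paper's choice $\eta=4\sigma/\beta$, which as literally written only accommodates the boundary case $\delta=\sigma/\beta$; your refinement is exactly what is needed to cover the full range of $\delta$ stated in the theorem.
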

	
	\begin{proof}
		Let $x\in\Rea$.
		Firstly, let us notice that the condition $2\frac{\sigma}{\beta}\le \varepsilon\le \frac{\theta}{2}-2\frac{\sigma}{\beta}$ implies that $\frac{\sigma}{\beta}\le \frac{1}{2}\min\{\theta-2\varepsilon,2\varepsilon\} - \frac{\sigma}{\beta}$.
		From Lemma~\ref{lemma1} we know that $$-L(\delta+\varepsilon)+F^\delta\circ F_\varepsilon({\bar R})(x)\leq F^\delta\circ F_\varepsilon(\hat\varphi)(x)-\varphi(x) \leq L(\delta+\varepsilon)+F^\delta\circ F_\varepsilon({\bar R})(x).$$
		Let us now prove that $F^\delta\circ F_\varepsilon({\bar R})(x)=0$.
		
		By applying Lemma~\ref{lemma2} with $\eta:=\theta$, $\lambda:=2\frac{\sigma}{\beta}$, $\rho:=\varepsilon$ and $R:={\bar R}$ we get
		$$F_\varepsilon({\bar R})(x)=\sum\limits_{a_i<0} \left[a_i\psi(b_i(x-\varepsilon-c_i))+a_i\psi(b_i(x+\varepsilon-c_i))\right]\le0.$$ Let us remark that $F_\varepsilon({\bar R})\in\mathcal{R}_{4\frac{\sigma}{\beta},\beta}$. Moreover,
		\begin{enumerate}
			\item $F^\delta\circ F_\varepsilon({\bar R})(x)=\max\{F_\varepsilon({\bar R})(x-\delta),F_\varepsilon({\bar R})(x+\delta)\}\le0$ since both terms are negative;
			\item $F^\delta \circ F_\varepsilon({\bar R})(x)=F^\delta(F_\varepsilon({\bar R}))(x)\ge0$.
		\end{enumerate}
		These inequalities follow from Lemma~\ref{lemma2}, by setting $\eta:=4\frac{\sigma}{\beta}$, $\lambda:=\frac{\sigma}{\beta}$, $\rho:=\delta$ and $R:=F_\varepsilon({\bar R})$.
		
		Therefore, we have proved that $\lvert F^\delta\circ F_\varepsilon(\hat\varphi)(x)-\varphi(x)\rvert\leq L(\delta+\varepsilon$) for any $x\in\Rea$. It follows that $\|F^{\delta} \circ F_\varepsilon (\hat\varphi)-\varphi\|_\infty\leq L(\varepsilon+\delta)$.
	\end{proof}
	
	Let us remark that Theorem~\ref{Upper Bound} works under the (only) implicit assumption that $\theta\ge8\frac{\sigma}{\beta}$. In our setting this should not be restrictive since it means that the noise added is made up of scattered, thin bumps, without any reference to the height of the bumps: this is what we expect when considering additive impulsive noise.
	
	\begin{corollary} \label{coroll2}
		Given $\theta,\beta,L\in\Rea$, let $R\in\mathcal{R}_{\theta,\beta}$, $\varphi\in \LipL$, and set $\hat\varphi:=\varphi+R$. If $\theta\ge8\frac{\sigma}{\beta}$,
		then $$\left\|F^{\frac{\sigma}{\beta}} \circ F_{2\frac{\sigma}{\beta}} \left(\hat\varphi\right)-\varphi\right\|_\infty\leq 3L\frac{\sigma}{\beta}.$$
	\end{corollary}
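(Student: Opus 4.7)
The plan is to recognize Corollary~\ref{coroll2} as a direct specialization of Theorem~\ref{Upper Bound}, obtained by taking $\varepsilon := 2\frac{\sigma}{\beta}$ and $\delta := \frac{\sigma}{\beta}$. With these particular values the two admissible intervals for $\varepsilon$ and $\delta$ in the statement of the theorem degenerate to single points, and the hypothesis $\theta \geq 8\frac{\sigma}{\beta}$ is precisely what is needed to place those points inside the respective ranges. The whole proof then reduces to checking two one-line inequalities and reading off the numerical bound.

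First I would verify the admissibility of $\varepsilon = 2\frac{\sigma}{\beta}$. The lower condition $2\frac{\sigma}{\beta} \leq \varepsilon$ holds with equality, while the upper condition $\varepsilon \leq \frac{\theta}{2} - 2\frac{\sigma}{\beta}$ rewrites as $\theta \geq 8\frac{\sigma}{\beta}$, which is the standing hypothesis. Next I would verify the admissibility of $\delta = \frac{\sigma}{\beta}$. Substituting $\varepsilon = 2\frac{\sigma}{\beta}$ gives
$$\tfrac{1}{2}\min\{\theta - 2\varepsilon,\, 2\varepsilon\} - \tfrac{\sigma}{\beta} = \tfrac{1}{2}\min\{\theta - 4\tfrac{\sigma}{\beta},\, 4\tfrac{\sigma}{\beta}\} - \tfrac{\sigma}{\beta},$$
and under $\theta \geq 8\frac{\sigma}{\beta}$ the minimum equals $4\frac{\sigma}{\beta}$, so this upper bound for $\delta$ collapses to $\frac{\sigma}{\beta}$. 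Since the lower bound is also $\frac{\sigma}{\beta}$, the choice $\delta = \frac{\sigma}{\beta}$ is admissible (in fact forced). Theorem~\ref{Upper Bound} then yields $\bigl\|F^{\sigma/\beta} \circ F_{2\sigma/\beta}(\hat\varphi) - \varphi\bigr\|_\infty \leq L(\varepsilon + \delta) = 3L\frac{\sigma}{\beta}$, which is the claimed estimate.

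There is no genuine obstacle in this argument: the only insight is noticing that $\theta \geq 8\frac{\sigma}{\beta}$ is the sharp condition that makes both admissible intervals non-empty while simultaneously forcing the pair $(\varepsilon,\delta) = (2\sigma/\beta,\, \sigma/\beta)$ to be legal. The role of the corollary is thus not to carry out new work, but to record the canonical quantitative consequence of the theorem when all free parameters are driven to the boundary of their admissible regions, yielding a closed-form bound expressed purely in terms of $L$, $\sigma$ and $\beta$.
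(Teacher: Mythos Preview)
Your proposal is correct and follows exactly the same approach as the paper: the paper's proof consists of the single sentence ``The claim follows from Theorem~\ref{Upper Bound} by taking $\varepsilon:=2\frac{\sigma}{\beta}$ and $\delta:=\frac{\sigma}{\beta}$.'' Your additional verification that these choices satisfy the admissibility constraints of the theorem under the hypothesis $\theta\ge 8\frac{\sigma}{\beta}$ is accurate and simply spells out what the paper leaves implicit.
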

	
	\begin{proof}
		The claim follows from Theorem~\ref{Upper Bound} by taking $\varepsilon:=2\frac{\sigma}{\beta}$ and $\delta:=\frac{\sigma}{\beta}.$
	\end{proof}
	
	Corollary~\ref{coroll2} and the well known stability of persistence diagrams with respect to the max-norm \cite{CSEdHa07} immediately imply the following result, which is of interest in TDA (the symbol $\dmatch$ denotes the usual bottleneck distance between persistence diagrams).
	
	\begin{corollary}
		\label{coroll3}
		Given $\theta,\beta,L\in\Rea$, let $R\in\mathcal{R}_{\theta,\beta}$, $\varphi\in \LipL$, and set $\hat\varphi:=\varphi+R$.
		If $\theta\ge8\frac{\sigma}{\beta}$, and $D$ and $D'$ are the persistence diagrams in degree $0$ of the filtering functions $\varphi$ and $F^{\frac{\sigma}{\beta}}\circ F_{2\frac{\sigma}{\beta}}(\hat\varphi)$, respectively, then
		$$\dmatch(D,D')\le 3L\frac{\sigma}{\beta}.$$
	\end{corollary}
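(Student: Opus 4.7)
The statement is essentially a direct consequence of two results already in hand, so my plan is simply to chain them together in the right order.

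First, I would verify the hypotheses of Corollary~\ref{coroll2}: the assumption $\theta\ge 8\frac{\sigma}{\beta}$ is identical, $R\in\mathcal{R}_{\theta,\beta}$ and $\varphi\in\LipL$ are identical, so Corollary~\ref{coroll2} applies verbatim and yields
$$\left\|F^{\frac{\sigma}{\beta}} \circ F_{2\frac{\sigma}{\beta}} (\hat\varphi)-\varphi\right\|_\infty\leq 3L\frac{\sigma}{\beta}.$$
This is the analytic half of the argument and requires no new work beyond invoking the previous corollary.

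Next, I would invoke the classical stability theorem of Cohen-Steiner, Edelsbrunner and Harer \cite{CSEdHa07}, which says that for any two sufficiently tame continuous functions $f,g:\Rea\to\Rea$ on the same domain, the bottleneck distance between their persistence diagrams is bounded above by $\|f-g\|_\infty$. Here $\varphi$ is $L$-Lipschitz, hence continuous, and $F^{\frac{\sigma}{\beta}}\circ F_{2\frac{\sigma}{\beta}}(\hat\varphi)$ is continuous as a composition of shifts and $\max$/$\min$ operations applied to the continuous function $\hat\varphi=\varphi+R$. So the stability theorem is applicable, giving
$$\dmatch(D,D')\le \left\|\varphi - F^{\frac{\sigma}{\beta}}\circ F_{2\frac{\sigma}{\beta}}(\hat\varphi)\right\|_\infty.$$

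Combining the two inequalities immediately yields the desired bound $\dmatch(D,D')\le 3L\frac{\sigma}{\beta}$. There is no real obstacle here; the only thing to be careful about is the tameness requirement of the stability theorem, which is automatic in our 1D Lipschitz setting, so the proof reduces to a one-line citation chain once Corollary~\ref{coroll2} is in place.
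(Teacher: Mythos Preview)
Your proposal is correct and follows exactly the same approach as the paper: invoke Corollary~\ref{coroll2} to get the sup-norm bound $\left\|F^{\frac{\sigma}{\beta}} \circ F_{2\frac{\sigma}{\beta}} (\hat\varphi)-\varphi\right\|_\infty\leq 3L\frac{\sigma}{\beta}$, then apply the stability theorem of \cite{CSEdHa07} to pass to the bottleneck distance. The paper states this as an immediate consequence without even a formal proof, so your write-up is if anything slightly more detailed than necessary.
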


	\section{Our main result}
	
	We are now ready to prove our main result. We start by stating a lemma concerning the probability
	$p$ that any two distinct points in a randomly chosen set of cardinality $k$ in an interval of
	length $\ell$ have a distance greater than $\eta$. A proof of the following lemma
	is provided in \cite{2001026}: for the reader's convenience, we report it here.
	\begin{lemma}\label{lemma3}
		Let $X_{1},\dots,X_{k}$, with $k\ge 2$, be independent random variables, uniformly distributed on
		the interval $[0,\ell]$, for some $\ell>0$. Let
		$$M:=\min_{1\le i,j\le k\atop i\neq j}\lvert X_{i}-X_{j}\rvert$$
		be the minimal distance between two distinct random variables. Then we have
		$$P(M>\eta)=
		\begin{cases}
			1 & \text{if } \eta\le 0,\\
			\left(1-\frac{(k-1)\eta}{\ell}\right)^{k} & \text{if } 0<\eta<\frac{\ell}{k-1}, \\
			0 & \text{if } \eta\ge\frac{\ell}{k-1}.
		\end{cases}
		$$
	\end{lemma}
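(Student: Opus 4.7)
The plan is to reduce the problem to a computation of the volume of a simplex via order statistics and a linear change of variables.

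First I would dispense with the two trivial cases. If $\eta \le 0$ the event $\{M > \eta\}$ holds almost surely (since the $X_i$ are absolutely continuous and hence a.s.\ pairwise distinct). If $\eta \ge \ell/(k-1)$, then requiring $(k-1)$ consecutive gaps all strictly larger than $\eta$ along $[0,\ell]$ would force a total length exceeding $(k-1)\eta \ge \ell$, which is impossible; so $P(M > \eta) = 0$.

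The heart of the proof is the middle range $0 < \eta < \ell/(k-1)$. Let $X_{(1)} \le X_{(2)} \le \dots \le X_{(k)}$ be the order statistics of $X_1,\dots,X_k$. Then
$$\{M > \eta\} = \bigl\{X_{(i+1)} - X_{(i)} > \eta \text{ for all } i=1,\dots,k-1\bigr\}.$$
Since the $X_i$ are i.i.d.\ uniform on $[0,\ell]$, the joint density of $(X_{(1)},\dots,X_{(k)})$ is $k!/\ell^k$ on the simplex $\{0 \le x_1 \le \dots \le x_k \le \ell\}$ and zero otherwise. Hence
$$P(M>\eta) = \frac{k!}{\ell^k}\int_{A} dx_1\cdots dx_k,$$
where $A = \{0 \le x_1,\ x_i - x_{i-1} > \eta \text{ for } i=2,\dots,k,\ x_k \le \ell\}$.

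Next I would perform the affine change of variables $y_i := x_i - (i-1)\eta$ for $i=1,\dots,k$ (Jacobian equal to $1$). Under this substitution, the constraint $x_i - x_{i-1} > \eta$ becomes $y_i > y_{i-1}$, while $x_1 \ge 0$ and $x_k \le \ell$ become $y_1 \ge 0$ and $y_k \le \ell - (k-1)\eta$. So the image of $A$ is the simplex
$$A' = \bigl\{0 \le y_1 \le y_2 \le \dots \le y_k \le \ell - (k-1)\eta\bigr\},$$
whose volume is $(\ell - (k-1)\eta)^k / k!$. Plugging in gives
$$P(M > \eta) = \frac{k!}{\ell^k}\cdot\frac{(\ell-(k-1)\eta)^k}{k!} = \left(1 - \frac{(k-1)\eta}{\ell}\right)^{k},$$
as desired. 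There is no real obstacle here; the only point requiring care is to verify that the change of variables is an affine bijection of $A$ onto $A'$ and that the endpoint constraints $y_1 \ge 0$ and $y_k \le \ell - (k-1)\eta$ are precisely the images of $x_1 \ge 0$ and $x_k \le \ell$, which is a straightforward check.
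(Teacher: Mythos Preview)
Your proof is correct and follows essentially the same route as the paper: the paper conditions on the ordering $X_1<\cdots<X_k$ by symmetry (which is exactly your order-statistics density $k!/\ell^k$), then applies the identical shift $y_i=x_i-(i-1)\eta$ to reduce to the volume of the simplex $\{0\le y_1<\cdots<y_k\le \ell-(k-1)\eta\}$. The only cosmetic difference is that the paper leaves the trivial cases implicit, whereas you spell them out.
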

	\begin{proof}
		It suffices to consider the case $0<\eta<\frac{\ell}{k-1}$. By symmetry, we have
		\begin{align}\nonumber
			P(M>\eta)&=k!\,P((M>\eta)\cap(X_{1}<X_{2}<\cdots<X_{k}))
			\intertext{(since $X_{1},\dots,X_{k}$ are uniformly distributed)}\label{a1}
			&=k!\,\frac{\text{Leb}(S)}{\ell^{k}}
		\end{align}
		where $S=\{x\in[0,\ell]^{k}\mid x_{1}<x_{2}-\eta<x_{3}-2\eta<\cdots<x_{k}-(k-1)\eta\}$ and
		$\text{Leb}$ denotes the Lebesgue measure. Setting $y_{i}=x_{i}-(i-1)\eta$ for $i=1,\dots,k$, we
		have that $\text{Leb}(S)=\text{Leb}(S')$ where
		$$S'=\{y\in[0,\ell-(k-1)\eta]^{k}\mid y_{1}<y_{2}<\cdots<y_{k}\}.$$
		On the other hand, again by symmetry, we have
		$$\text{Leb}(S')=\frac{\text{Leb}([0,\ell-(k-1)\eta]^{k})}{k!}=\frac{(\ell-(k-1)\eta)^{k}}{k!}$$
		and plugging this last identity into \eqref{a1} we get the thesis.
	\end{proof}

	
	We can now prove the following result, concerning the expected value of the error $\left\|
		F^{\frac{\sigma}{\beta}}\circ F_{2\frac{\sigma}{\beta}}(\hat\varphi)-\varphi
		\right\|_\infty$.
	
	\begin{theorem}
		\label{main_result}
Let us choose a function $\varphi\in\LipL$, a non-negative continuous function $\psi:\Rea\to\Rea$ with $\|\psi\|_\infty\leq 1$ and $\mathrm{supp}(\psi)\subseteq\left]-\sigma,\sigma\right[$ for some $\sigma>0$, two positive numbers $\beta$ and $\ell$, and an integer $k\ge 2$.
For $i=1,\ldots,k$, let us fix $a_i\in\Rea$ and $b_i\ge\beta$, and set $\bar\alpha:= \max\lvert a_i\rvert$.
		Moreover, let $c_{1},\dots,c_{k}$ be independent random variables, uniformly distributed on
		the interval $[0,\ell]$. Let us consider the random variable
        $\hat\varphi:=\varphi+R$, where $R(x):=\sum\limits_{i=1}^{k}a_i\psi(b_i(x-c_i))$ for any $x\in \Rea$.
        If $\frac{\sigma}{\beta}< \frac{\ell}{8(k-1)}$, then
		$$E\left(\left\|
		F^{\frac{\sigma}{\beta}}\circ F_{2\frac{\sigma}{\beta}}(\hat\varphi)-\varphi
		\right\|_\infty\right)\le 3 L\frac{\sigma}{\beta}+k\bar\alpha\left(1-\left(1-8\frac{(k-1)}{\ell}\frac{\sigma}{\beta}\right)^{k}\right).$$
	\end{theorem}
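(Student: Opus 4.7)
The strategy is to decompose the expectation by conditioning on whether the random centers $c_1,\ldots,c_k$ are sufficiently separated for Corollary~\ref{coroll2} to apply. Let $M := \min_{i\neq j}|c_i-c_j|$ denote the minimum inter-center distance and consider the event $A := \{M > 8\frac{\sigma}{\beta}\}$. On the favorable event $A$, since $b_i\ge\beta$ for every $i$ and the centers are pairwise separated by more than $8\frac{\sigma}{\beta}$, the realized noise $R$ lies in $\mathcal{R}_{8\frac{\sigma}{\beta},\beta}$, so Corollary~\ref{coroll2} delivers the deterministic bound $\|F^{\frac{\sigma}{\beta}} \circ F_{2\frac{\sigma}{\beta}}(\hat\varphi)-\varphi\|_\infty \le 3L\frac{\sigma}{\beta}$.

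On the complementary event $A^c$, where Corollary~\ref{coroll2} cannot be invoked, I would use a crude triangle-inequality estimate that decouples the noise from the Lipschitz distortion:
\begin{align*}
\left\|F^{\frac{\sigma}{\beta}} \circ F_{2\frac{\sigma}{\beta}}(\hat\varphi) - \varphi\right\|_\infty
& \le \left\|F^{\frac{\sigma}{\beta}} \circ F_{2\frac{\sigma}{\beta}}(\hat\varphi) - F^{\frac{\sigma}{\beta}} \circ F_{2\frac{\sigma}{\beta}}(\varphi)\right\|_\infty \\
& \quad + \left\|F^{\frac{\sigma}{\beta}} \circ F_{2\frac{\sigma}{\beta}}(\varphi) - \varphi\right\|_\infty.
\end{align*}
The first summand is at most $\|\hat\varphi-\varphi\|_\infty = \|R\|_\infty \le k\bar\alpha$, by non-expansivity of the composed GENEO (Proposition~\ref{propcomp}), while the second summand is at most $3L\frac{\sigma}{\beta}$, obtained by applying Lemma~\ref{lemma1}(iii) with the zero noise function, since $F^\delta \circ F_\varepsilon(0)\equiv 0$. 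Hence on $A^c$ the error is at most $3L\frac{\sigma}{\beta}+k\bar\alpha$.

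Combining the two cases by taking expectations yields
\begin{align*}
E\left(\left\|F^{\frac{\sigma}{\beta}} \circ F_{2\frac{\sigma}{\beta}}(\hat\varphi)-\varphi\right\|_\infty\right) & \le 3L\frac{\sigma}{\beta}\, P(A) + \left(3L\frac{\sigma}{\beta}+k\bar\alpha\right) P(A^c) \\
& = 3L\frac{\sigma}{\beta} + k\bar\alpha\, P(A^c).
\end{align*}
The last step is to compute $P(A^c)$ via Lemma~\ref{lemma3} applied with $\eta := 8\frac{\sigma}{\beta}$: the hypothesis $\frac{\sigma}{\beta} < \frac{\ell}{8(k-1)}$ is precisely what places $\eta$ in the nontrivial regime $0<\eta<\frac{\ell}{k-1}$ of that lemma, giving $P(A) = \left(1 - 8\frac{(k-1)}{\ell}\frac{\sigma}{\beta}\right)^k$, whence the stated inequality. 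I do not anticipate any genuine obstacle; the only point requiring care is the bound on $A^c$, where the pessimistic estimate $\|R\|_\infty\le k\bar\alpha$ (accounting for the possibility that all bumps overlap when the centers cluster) is what produces the $k\bar\alpha$ factor in the statement. Any attempt to sharpen it would give a qualitatively similar but more involved expression, which is not needed to establish the claim as written.
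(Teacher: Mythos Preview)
Your proposal is correct and follows essentially the same strategy as the paper: split on the event that the centers are separated by more than $8\sigma/\beta$, invoke Corollary~\ref{coroll2} on the good event, use a crude bound $3L\sigma/\beta + k\bar\alpha$ on the bad event, and compute the probability via Lemma~\ref{lemma3}. The only cosmetic difference is that the paper obtains the crude bound by applying Lemma~\ref{lemma1}(iii) directly to $\hat\varphi$ and then bounding $\|F^{\sigma/\beta}\circ F_{2\sigma/\beta}(R)\|_\infty\le\|R\|_\infty$ via non-expansivity, whereas you triangulate through $F^{\sigma/\beta}\circ F_{2\sigma/\beta}(\varphi)$; the two arguments are equivalent.
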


	\begin{proof}
	    By setting $\delta=\frac{\sigma}{\beta}$ and $\varepsilon=2\frac{\sigma}{\beta}$ in statement $\mathrm{iii)}$ of Lemma~\ref{lemma1}, we have that $$\left\|
		F^{\frac{\sigma}{\beta}}\circ F_{2\frac{\sigma}{\beta}}(\hat\varphi)-\varphi
		\right\|_\infty\leq3L\frac{\sigma}{\beta}+\left\|
		F^{\frac{\sigma}{\beta}}\circ F_{2\frac{\sigma}{\beta}}(R)
		\right\|_\infty.$$ Since the operator $F^{\frac{\sigma}{\beta}}\circ F_{2\frac{\sigma}{\beta}}$ is non-expansive
        and $F^{\frac{\sigma}{\beta}}\circ F_{2\frac{\sigma}{\beta}}(\mathbf{0})=\mathbf{0}$,  it follows that
        $$\left\|
		F^{\frac{\sigma}{\beta}}\circ F_{2\frac{\sigma}{\beta}}(R)
		\right\|_\infty\leq\left\|R
		\right\|_\infty\leq k\bar\alpha\left\|\psi
		\right\|_\infty\leq k\bar\alpha.$$
		Therefore, $\left\|
		F^{\frac{\sigma}{\beta}}\circ F_{2\frac{\sigma}{\beta}}(\hat\varphi)-\varphi
		\right\|_\infty\leq3L\frac{\sigma}{\beta}+k\bar\alpha$.
If we apply Lemma~\ref{lemma3} with $\eta=8\frac{\sigma}{\beta}$, we obtain that ${R}\in\mathcal{R}_{8\frac{\sigma}{\beta},\beta}$ with probability $p:=\left(1-8\frac{(k-1)}{\ell}\frac{\sigma}{\beta}\right)^{k}$.
If ${R}\in\mathcal{R}_{8\frac{\sigma}{\beta},\beta}$, we can apply Theorem~\ref{Upper Bound} by setting $\delta=\frac{\sigma}{\beta}$, $\varepsilon=2\frac{\sigma}{\beta}$ and $\theta=8\frac{\sigma}{\beta}$, and hence we obtain that
        $\left\|
		F^{\frac{\sigma}{\beta}}\circ F_{2\frac{\sigma}{\beta}}(\hat\varphi)-\varphi
		\right\|_\infty\le 3 L\frac{\sigma}{\beta}$ with probability at least $p$.
Since $\left\|
		F^{\frac{\sigma}{\beta}}\circ F_{2\frac{\sigma}{\beta}}(\hat\varphi)-\varphi
		\right\|_\infty\le 3 L\frac{\sigma}{\beta}+k\bar\alpha$ in any case,
it follows that
\begin{align*}
            E\left(\left\|F^{\frac{\sigma}{\beta}}\circ F_{2\frac{\sigma}{\beta}}(\hat\varphi)-\varphi
		\right\|_\infty\right)&\le 3 L\frac{\sigma}{\beta}p+\left(3 L\frac{\sigma}{\beta}+k\bar\alpha\right)\left(1-p\right)\\
            &= 3 L\frac{\sigma}{\beta}+k\bar\alpha\left(1-p\right).
\end{align*}
\end{proof}
	
	
%
%
	
Theorem~\ref{main_result} and the well known stability of persistence diagrams with respect to the max-norm \cite{CSEdHa07} immediately imply the following result, which is of interest in TDA (the symbol $\dmatch$ denotes the usual bottleneck distance between persistence diagrams).
	
	\begin{corollary}
		\label{main_corollary}
Let us make the same assumptions of Theorem~\ref{main_result}.
Let $D$ and $D'$ be the persistence diagrams in degree $0$ of the filtering functions $\varphi$ and $F^{\frac{\sigma}{\beta}}\circ F_{2\frac{\sigma}{\beta}}(\hat\varphi)$,
respectively.  If $\frac{\sigma}{\beta}< \frac{\ell}{8(k-1)}$, then
		$$E\left(\dmatch(D,D')\right)\le 3 L\frac{\sigma}{\beta}+k\bar\alpha\left(1-\left(1-8\frac{(k-1)}{\ell}\frac{\sigma}{\beta}\right)^{k}\right).$$
	\end{corollary}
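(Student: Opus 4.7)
The plan is to reduce Corollary~\ref{main_corollary} to Theorem~\ref{main_result} via the classical stability theorem for persistence diagrams. Specifically, the stability result of Cohen-Steiner, Edelsbrunner and Harer~\cite{CSEdHa07} asserts that for any two continuous filtering functions $f,g$ on the same space, the bottleneck distance between their degree-$0$ persistence diagrams satisfies $\dmatch(D_f,D_g)\le\|f-g\|_\infty$. Applying this pointwise (i.e., for each realization of the random variables $c_1,\dots,c_k$) to $f=\varphi$ and $g=F^{\sigma/\beta}\circ F_{2\sigma/\beta}(\hat\varphi)$, we obtain the almost-sure inequality
$$\dmatch(D,D')\le\left\|F^{\frac{\sigma}{\beta}}\circ F_{2\frac{\sigma}{\beta}}(\hat\varphi)-\varphi\right\|_\infty.$$

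The next step is to take the expected value on both sides. Since $\dmatch(D,D')$ and $\|F^{\sigma/\beta}\circ F_{2\sigma/\beta}(\hat\varphi)-\varphi\|_\infty$ are both measurable functions of the random centers $c_1,\dots,c_k$ (continuity of the relevant operators in the sup-norm ensures measurability), monotonicity of the expectation yields
$$E\bigl(\dmatch(D,D')\bigr)\le E\left(\left\|F^{\frac{\sigma}{\beta}}\circ F_{2\frac{\sigma}{\beta}}(\hat\varphi)-\varphi\right\|_\infty\right).$$
Then I would invoke Theorem~\ref{main_result}, whose hypothesis $\frac{\sigma}{\beta}<\frac{\ell}{8(k-1)}$ is exactly the one assumed here, to bound the right-hand side by $3L\frac{\sigma}{\beta}+k\bar\alpha\left(1-\left(1-8\frac{k-1}{\ell}\frac{\sigma}{\beta}\right)^{k}\right)$, which is the claimed inequality.

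I do not expect any real obstacle: the entire argument is a bookkeeping application of sup-norm stability followed by monotonicity of expectation, with all the analytical work already carried out in Theorem~\ref{main_result}. The only point worth mentioning explicitly is the measurability of the map $(c_1,\dots,c_k)\mapsto\dmatch(D,D')$, which follows from the fact that $R$, and hence $\hat\varphi$ and $F^{\sigma/\beta}\circ F_{2\sigma/\beta}(\hat\varphi)$, depend continuously (in the sup-norm) on the parameters $c_i$, combined with the sup-norm continuity of the bottleneck distance.
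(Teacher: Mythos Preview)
Your proposal is correct and follows exactly the approach the paper itself indicates: apply the sup-norm stability theorem of \cite{CSEdHa07} to bound $\dmatch(D,D')$ by $\left\|F^{\frac{\sigma}{\beta}}\circ F_{2\frac{\sigma}{\beta}}(\hat\varphi)-\varphi\right\|_\infty$, take expectations, and invoke Theorem~\ref{main_result}. Your explicit remark on measurability is a nice addition, but otherwise there is nothing to distinguish your argument from the paper's.
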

	
	This result shows that the use of suitable GENEOs can make TDA (relatively) stable also in the presence of impulsive noise, under the assumptions we have considered in this paper.

	\section{Examples and experiments}
	
	We will now validate our approach based on GENEOs by giving two examples and illustrating some experimental results.
	
	\subsection{Examples}
	In order to verify how our approach works, we will set $\tau_n:=\left(1-\frac{1}{n}\right)2\frac{\sigma}{\beta}+\frac{1}{n}\left(\frac{\theta}{2}-2\frac{\sigma}{\beta}\right)$ and consider the upper bound
	$\left\|F^{\frac{\tau_n}{2}} \circ F_{\tau_n} \left(\hat\varphi\right)-\varphi\right\|_\infty\leq \frac{3}{2}L\tau_n$, obtained by applying Corollary~\ref{coroll2}. We observe that $\tau_n\ge 2\frac{\sigma}{\beta}$ for every index $n$, and $\lim\limits_{n\to+\infty}\tau_n=2\frac{\sigma}{\beta}$.
	We will examine two examples that use the GENEOs $F^{\frac{\tau_n}{2}} \circ F_{\tau_n}$ and show how our method based on such operators and the method based on convolutions differ, as for their capability in removing additive impulsive noise. Moreover, we will compare the actual error $\left\|F^{\frac{\tau_n}{2}} \circ F_{\tau_n} \left(\hat\varphi\right)-\varphi\right\|_\infty$ to its upper bound $ \frac{3}{2}L\tau_n$, by running several simulations.
	The convolutions that will be applied in our examples use the functions $T_h:\Rea\to\Rea$ defined by setting $$T_h(x):=
	\begin{cases}
		\frac{h}{2} &\mbox{if } -\frac{1}{h}\le x\le\frac{1}{h}\\
		0 &\mbox{otherwise}
	\end{cases}$$ for $h>0$.
	We will see that, although the convolution with such functions is also a GENEO, it will not be able to efficiently remove the noise.
	
	Our noise function $R$ will be constructed starting from the mother function $\psi$ defined by setting $\psi(x):=e^{1-\frac{1}{1-x^2}}$ for $x\in\left]-1,1\right[$ and $\psi(x):=0$ for  $x\notin\left]-1,1\right[$. Using the notation introduced in the previous sections, we will set $\sigma=1.1$, thus satisfying the condition $\mathrm{supp}(\psi)\subseteq\left]-\sigma,\sigma\right[$.
	The impulsive noise will be added in an interval $\left[-{{\ell}},{{\ell}}\ \right]$.
	The following parameters are considered, with these respective uniform distributions:
	\begin{itemize}
		\item $k \sim \mathrm{Unif}_{\{1,\ldots,10\}}$
		\item $a_i \sim \mathrm{Unif}_{\left]-100,100\right[}$ for $i=1,\ldots,k$
		\item $b_i \sim \mathrm{Unif}_{\left]0,100\right[}$ for $i=1,\ldots,k$
		\item $c_i \sim \mathrm{Unif}_{\left]-{{\ell}}+\frac{\sigma}{\beta},{{\ell}}-\frac{\sigma}{\beta}\right[}$ for $i=1,\ldots,k$.
	\end{itemize}
	We set $\beta:=\min\limits_{i=1,\ldots,N} b_i$, $\bar\alpha:=\max\limits_{i=1,\ldots,N} \lvert a_i\rvert$, and $\eta:=\min\limits_{i\not=j} \lvert c_i-c_j\rvert$.
	After producing random values for the parameters $N,a_i,b_i,c_i$, our algorithm checks whether $\eta>8\frac{\sigma}{\beta}$, otherwise it generates another set of parameters.
	
	\subsubsection{First example}
	Let us consider the function
	$$\varphi(x):=
	\begin{cases}
		\sin x &\mbox{if } -4\pi\le x\le4\pi \\
		0 &\mbox{otherwise}
	\end{cases}$$ for $x\in\Rea$. We observe that $\varphi\in\LipL$, for $L=1$. We will add noise in the interval $[-4\pi,4\pi]$ (meaning ${{\ell}}=4\pi$) and visualize the results in such an interval.
	Figure~\ref{fig_ex_1} illustrates how the function $\hat\varphi$ looks like compared to $\varphi$.
	\begin{figure}[t]
		\centering
		\includegraphics[width=4.8cm]{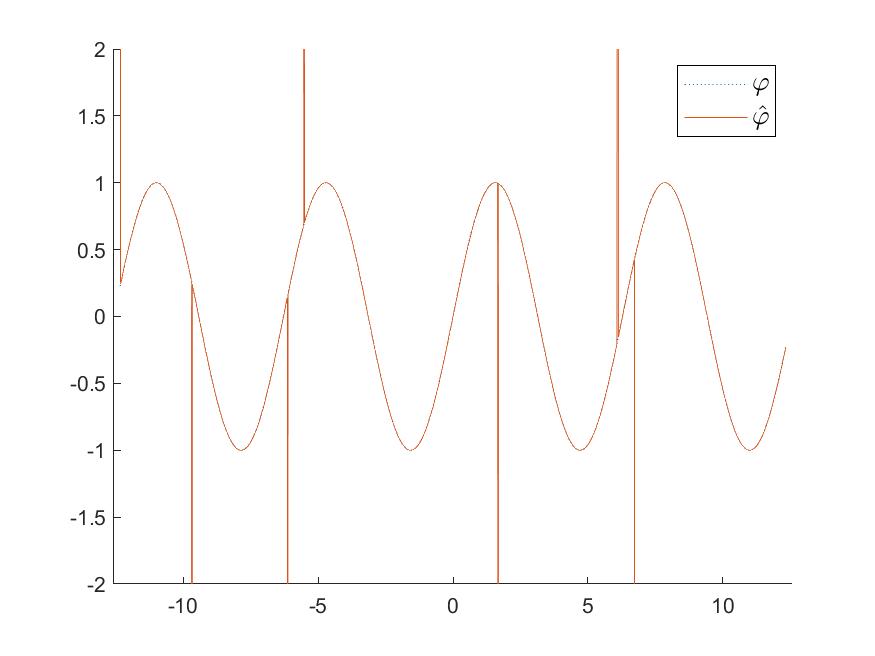}
		\caption{Example 1: Comparison between $\varphi$ and $\hat\varphi$.}
		\label{fig_ex_1}
	\end{figure}
	
	We will start by considering how well the convolution $\hat\varphi*T_n$ can approximate the original function $\varphi$, when $n$ goes from $3$ to $100$.
	From Figure~\ref{table_ex_1_A}, it is immediately apparent that the max-norm distance between $\hat\varphi*T_n$ and $\varphi$ remains quite large.
	\begin{figure}[t]
		\begin{center}
			\begin{tabular}{r c}	\includegraphics[width=4.8cm]{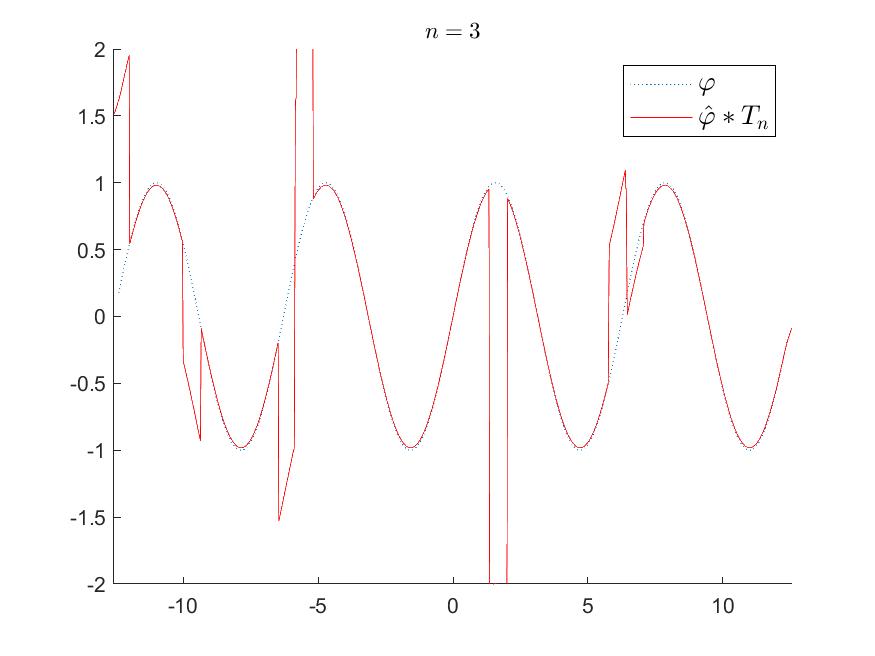} & \includegraphics[width=4.8cm]{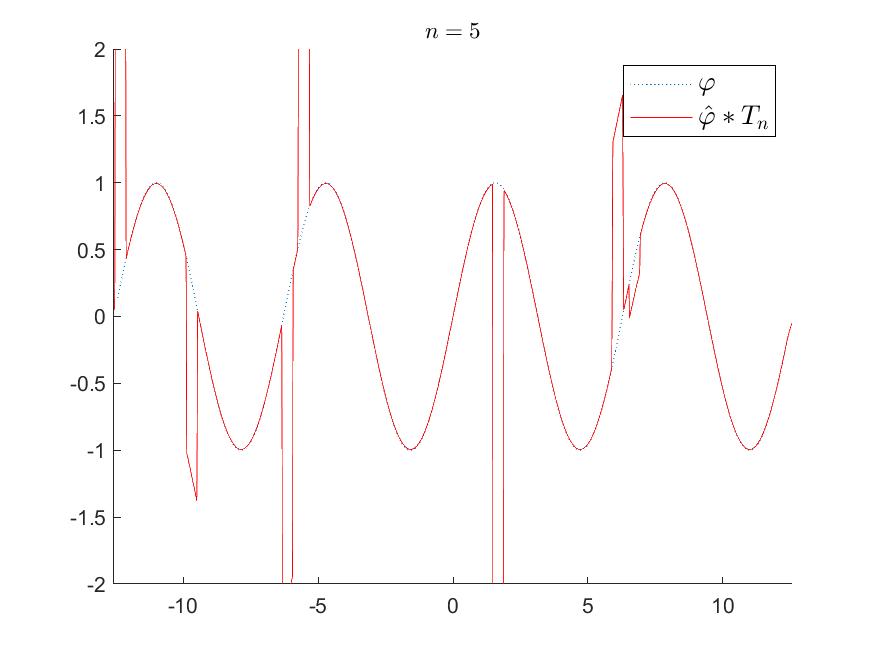} \\
			\includegraphics[width=4.8cm]{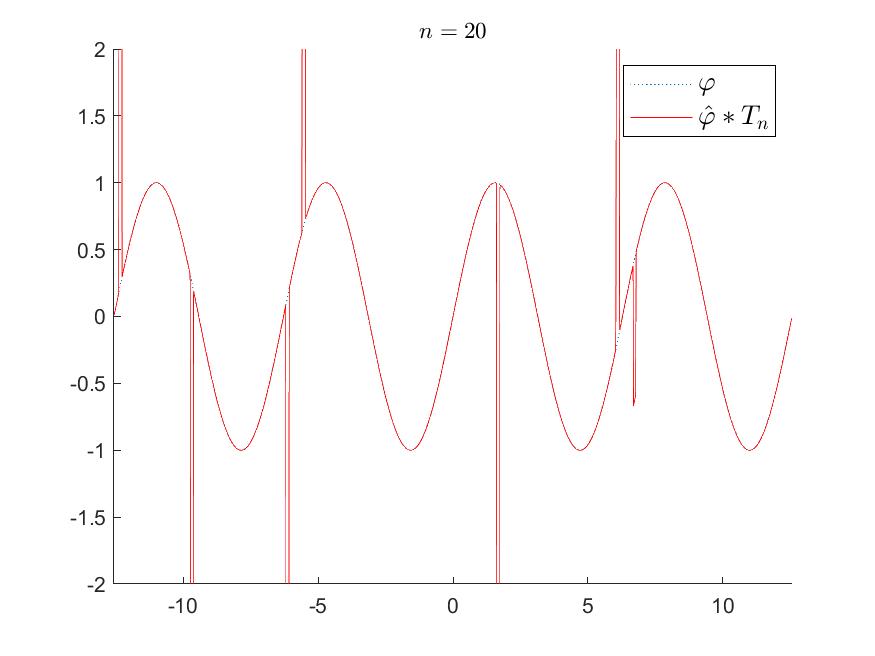} & \includegraphics[width=4.8cm]{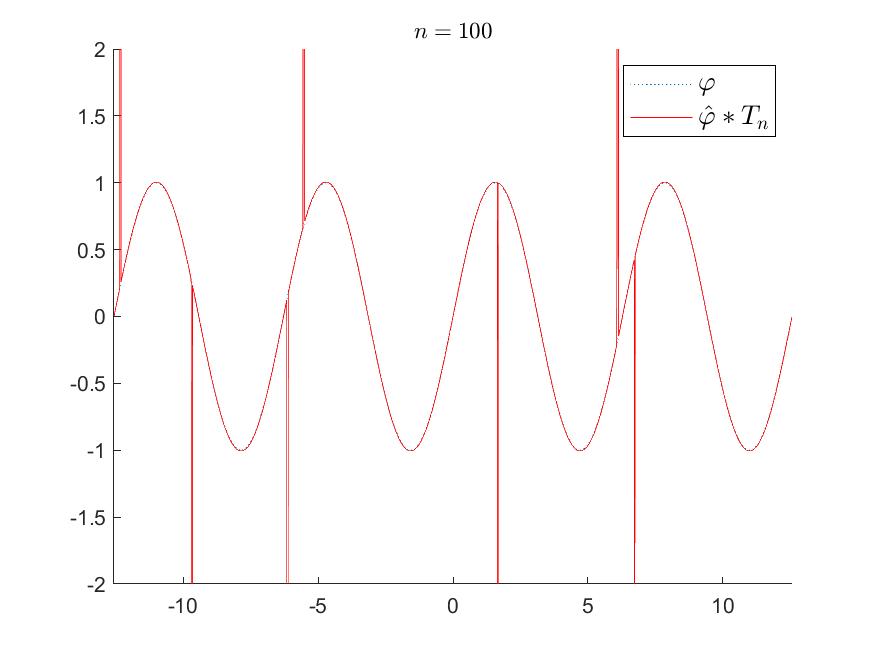}  \\
			\end{tabular}
			\caption{Example 1: Denoising via convolution with $T_h$ ($h=n$).}
			\label{table_ex_1_A}
		\end{center}
	\end{figure}
	If we apply a convolution with $T_\frac{1}{n}$, for $3\le n\le 100$, we get the results displayed in Figure~\ref{table_ex_1_B}, showing that all information represented by the function $\varphi$ is progressively destroyed.
	\begin{figure}[t]
		\begin{center}
			\begin{tabular}{r c}
			\includegraphics[width=4.8cm]{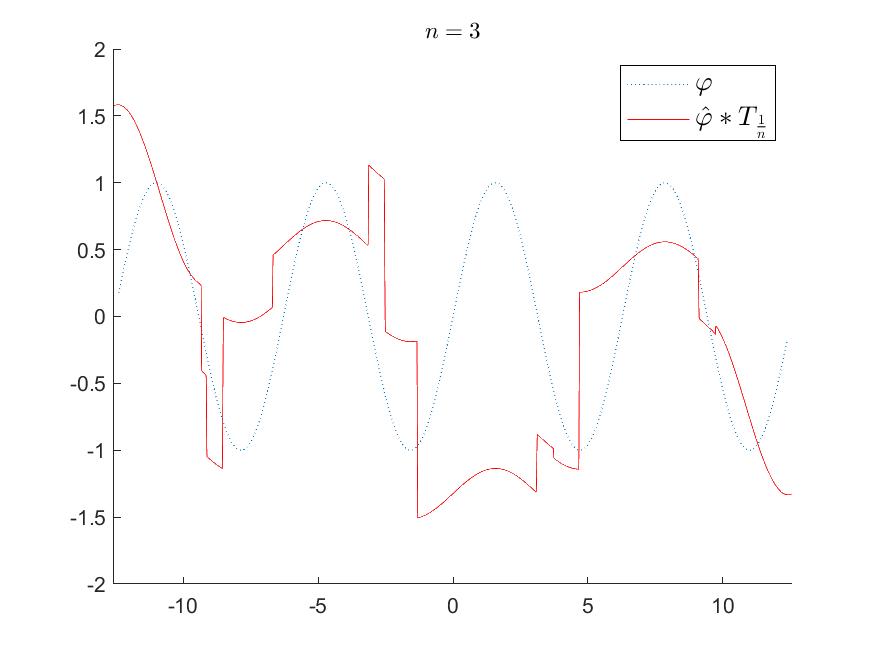} & \includegraphics[width=4.8cm]{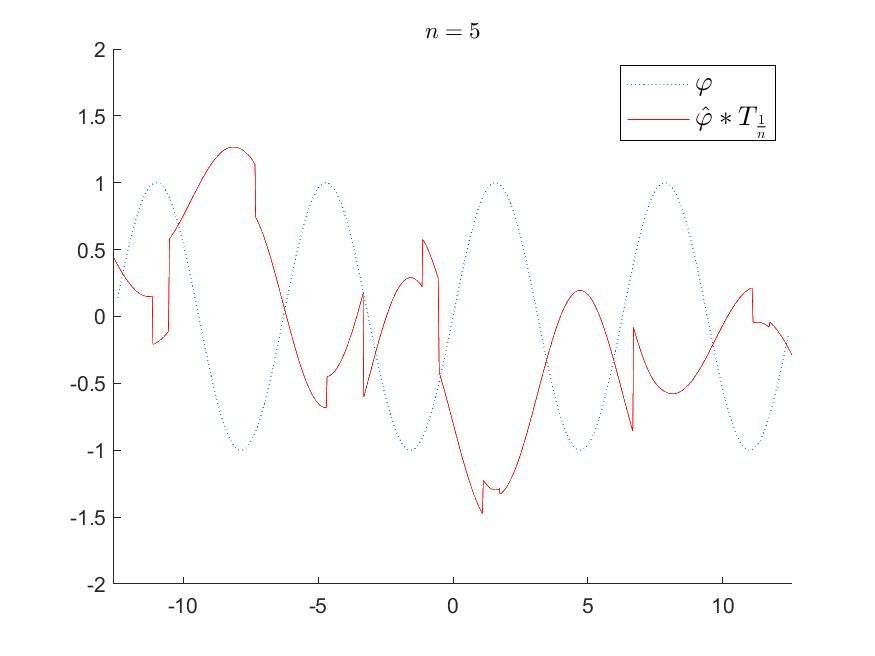} \\
			\includegraphics[width=4.8cm]{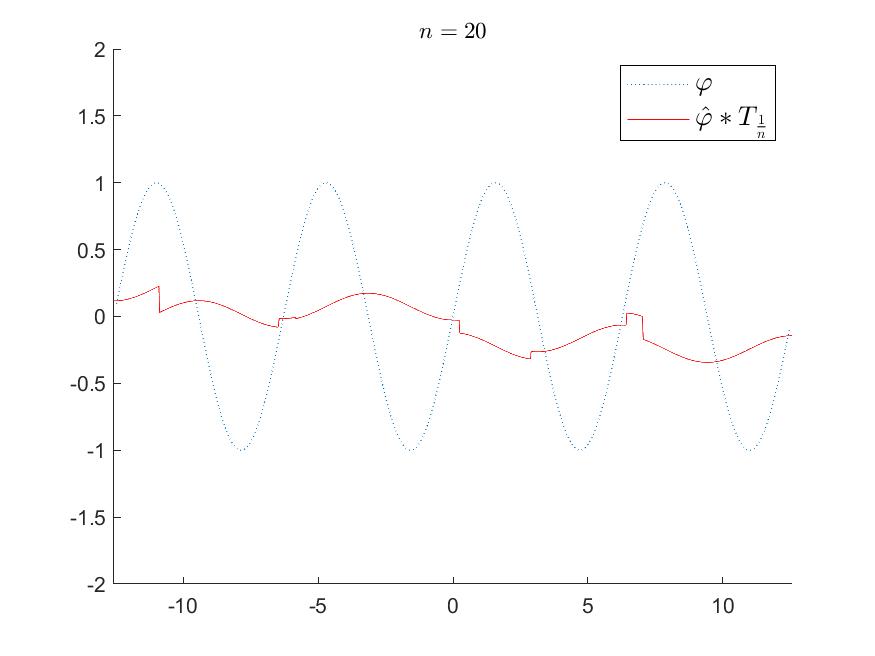} & \includegraphics[width=4.8cm]{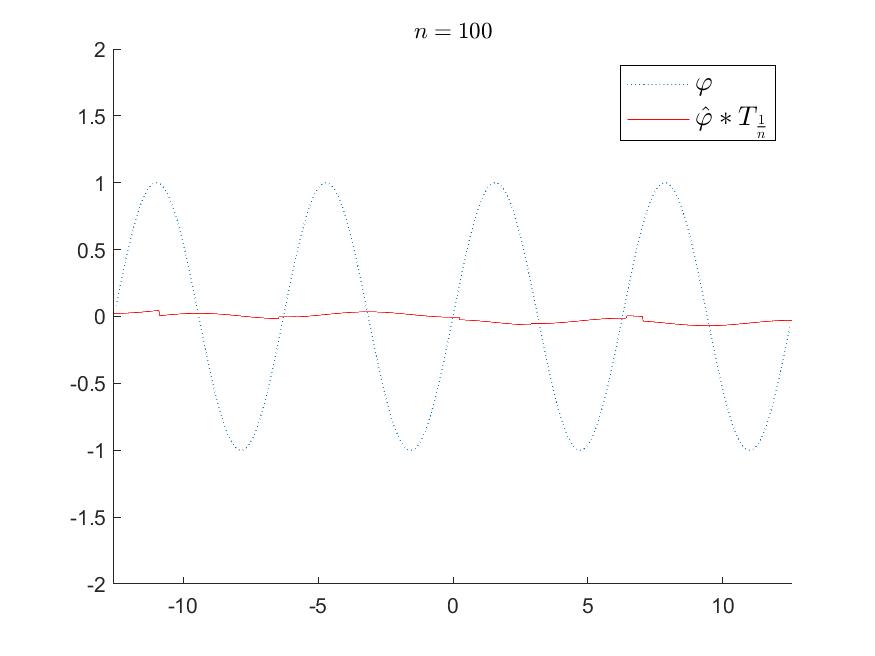}  \\
			\end{tabular}
			\caption{Example 1: Denoising via convolution with $T_h$ ($h=\frac{1}{n}$).}
			\label{table_ex_1_B}
		\end{center}
	\end{figure}
	In contrast, if we apply the operator $F^{\frac{\tau_n}{2}} \circ F_{\tau_n} \left(\hat\varphi\right)$, for $n=3,5,20,100$, we get the results displayed in Figure~\ref{table_ex_1_C}. As we can see, this operator is much more efficient in removing the bumps and restoring the function $\varphi$.
	
	\begin{figure}[t]
		\begin{center}
			\begin{tabular}{r c}
			\includegraphics[width=4.8cm]{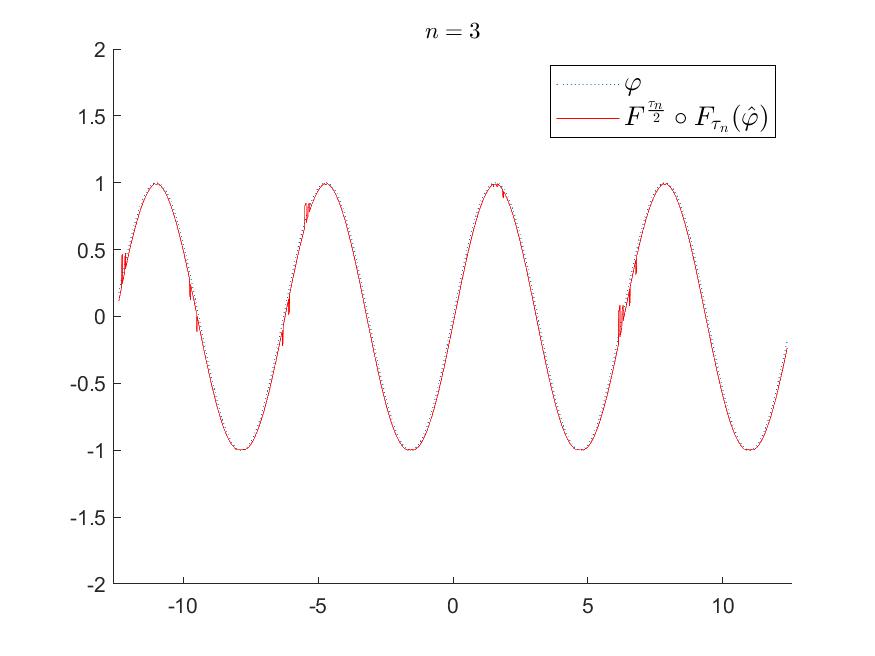} & \includegraphics[width=4.8cm]{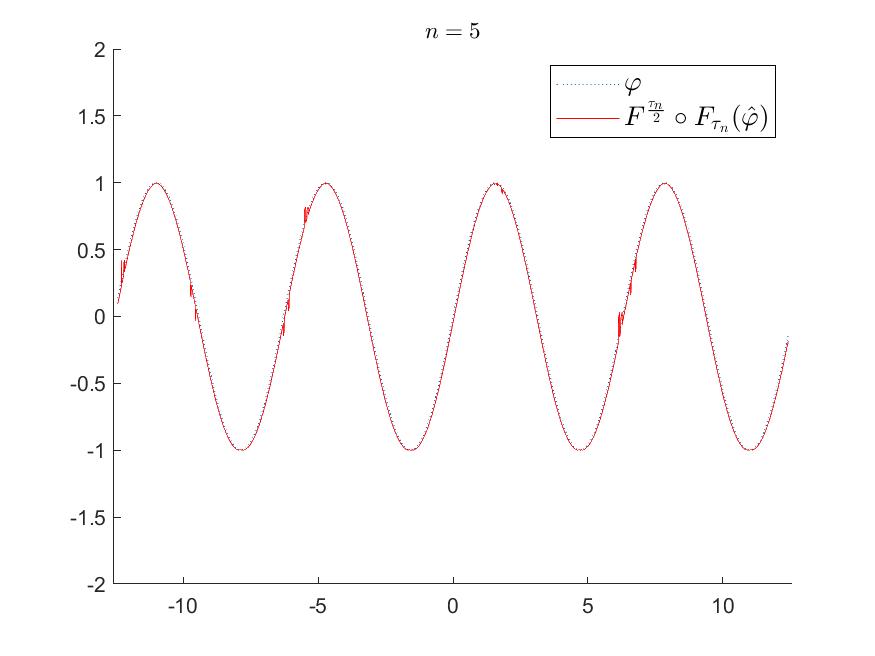} \\
			\includegraphics[width=4.8cm]{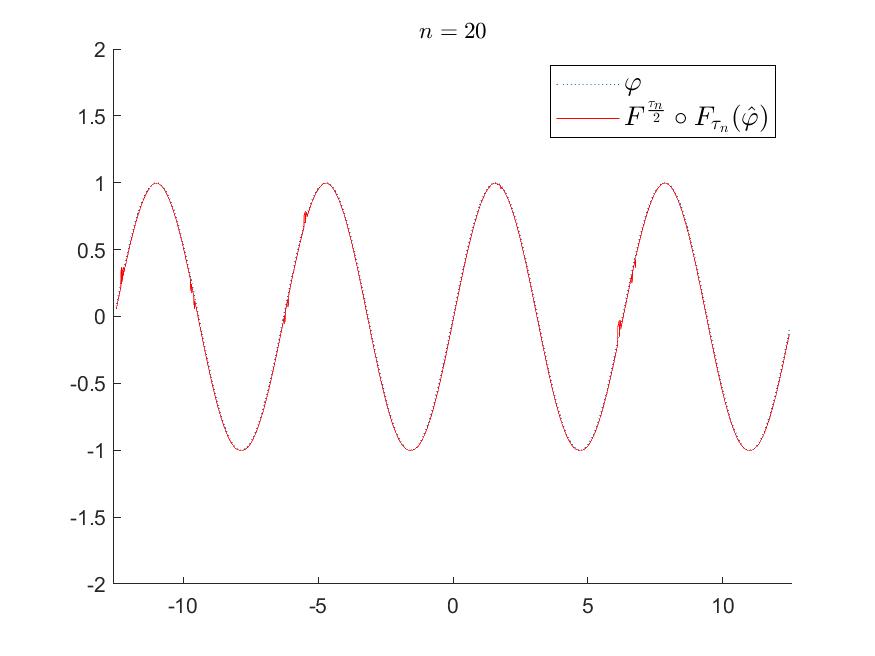} & \includegraphics[width=4.8cm]{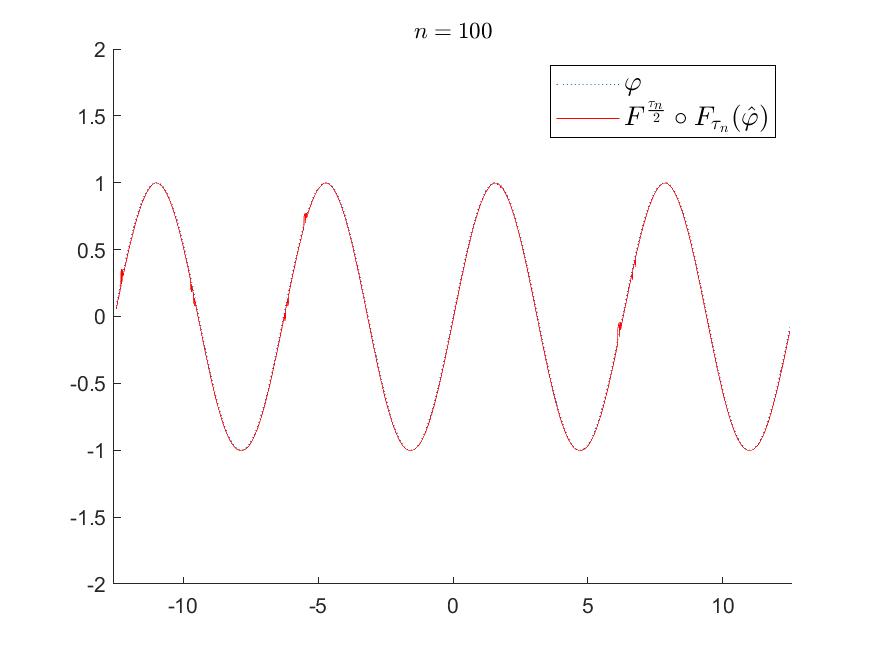}  \\
			\end{tabular}
			\caption{Example 1: Denoising via the proposed GENEO $F^{\frac{\tau_n}{2}} \circ F_{\tau_n}$.}
			\label{table_ex_1_C}
		\end{center}
	\end{figure}
	
	As a matter of fact, when we apply a convolution with the function $T_h$ and check the corresponding errors via the sup-norm, we get the results displayed in Figure~\ref{table_ex_1_D}.
	Since $\lim\limits_{h\to+\infty}\hat\varphi*T_{h}=\hat\varphi$ and $\lim\limits_{h\to+\infty}\hat\varphi*T_{\frac{1}{h}}=0$, we get that the errors tend to  $\|\hat\varphi-\varphi\|_\infty=\max\limits_{i=1,\ldots,N} \lvert a_i\rvert=\bar\alpha$ and $\|\varphi\|_\infty$, respectively.
	
	\begin{figure}[t]
		\begin{center}
			\begin{tabular}{c c}             \includegraphics[width=4.8cm]{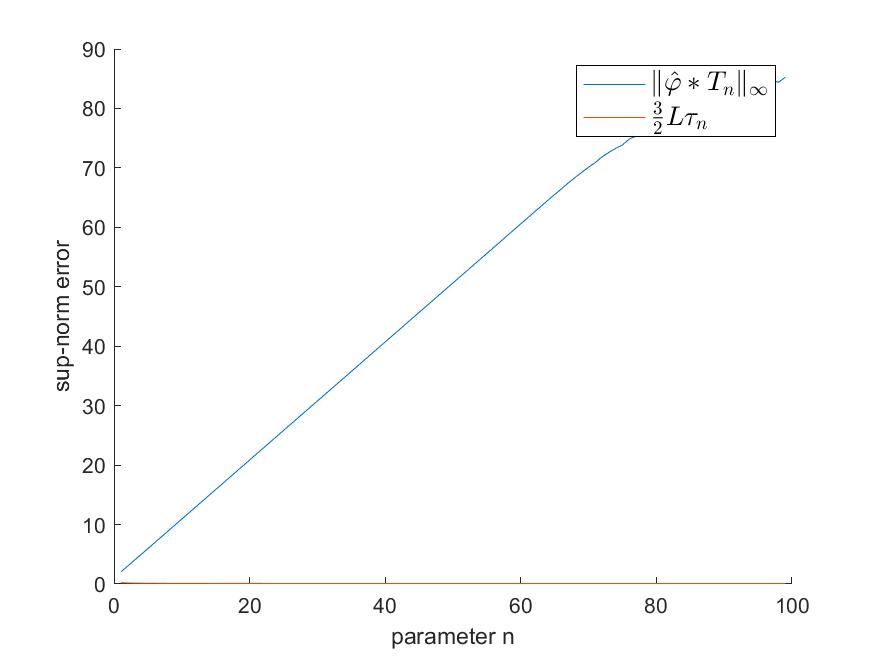} & \includegraphics[width=4.8cm]{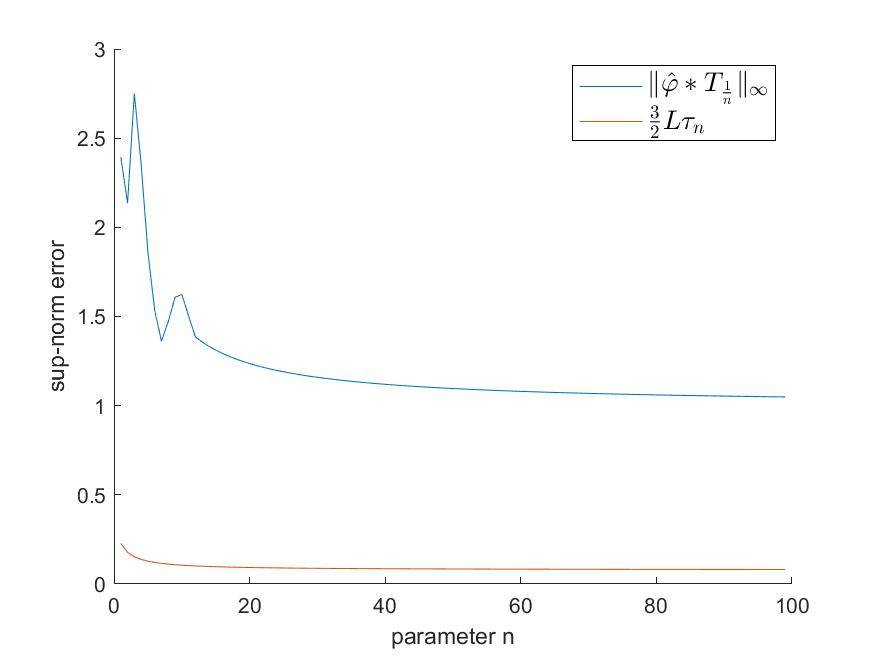}
			\end{tabular}
		\end{center}
		\caption{Example 1: Error made by applying a convolution with $T_n$ (left) or $T_\frac{1}{n}$ (right).}
		\label{table_ex_1_D}
	\end{figure}
	
	In contrast, if we apply the operator $F^{\frac{\tau_n}{2}} \circ F_{\tau_n}$, we get the results displayed in Figure~\ref{fig_ex_1_E}, showing that the upper bound for the error stated in Corollary~\ref{coroll2} is quite tight. As we can expect, we get the best denoising by replacing $\tau_n$ with $2\frac{\sigma}{\beta}$ (see Figure~\ref{fig_ex_1_F}).
	\begin{figure}[t]
		\centering
		\includegraphics[width=4.8cm]{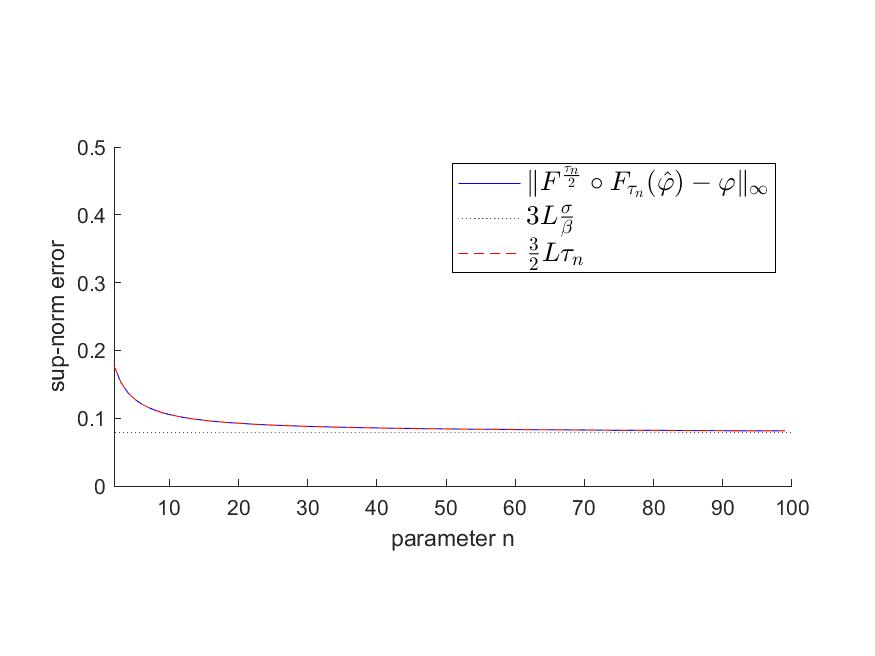}
		\caption{Example 1: Error made by using the GENEO $F^{\frac{\tau_n}{2}} \circ F_{\tau_n}$.}
		\label{fig_ex_1_E}
	\end{figure}
	
	\begin{figure}[t]
		\centering
		\includegraphics[width=4.8cm]{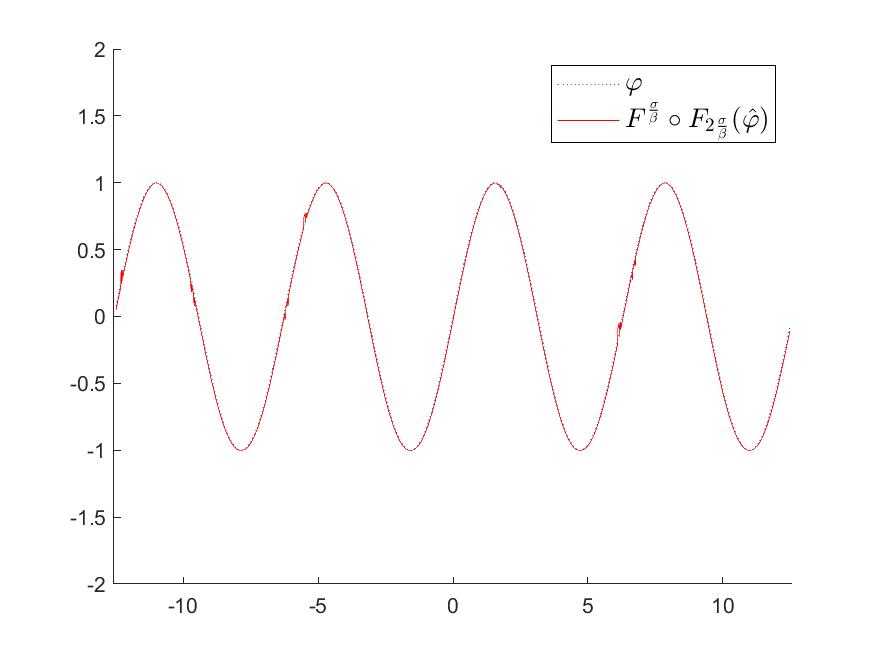}
		\caption{Example 1: Denoising via $F^\frac{\sigma}{\beta} \circ F_{2\frac{\sigma}{\beta}}$.}
		\label{fig_ex_1_F}
	\end{figure}
	
	Finally, we executed 1000 simulations. In each of them we have produced a function $\hat\varphi$ by adding  random impulsive noise to the function $\varphi$, then we have applied  $F^{\frac{\sigma}{\beta}} \circ F_{2\frac{\sigma}{\beta}}$ to $\hat\varphi$, in order to see how tight the upper bound in Corollary~\ref{coroll2} is to the actual error $\|F^{\frac{\sigma}{\beta}} \circ F_{2\frac{\sigma}{\beta}}(\hat\varphi)-\varphi\|_\infty$. The same parameters as in the beginning of this example have been used. As we can see in Figure~\ref{histo1} the overestimation committed by our upper bound is often quite close to zero, relatively to the Lipschitz constant $L$ of the function $\varphi$.
	\begin{figure}[t]
		\centering
		\includegraphics[width=4.8cm]{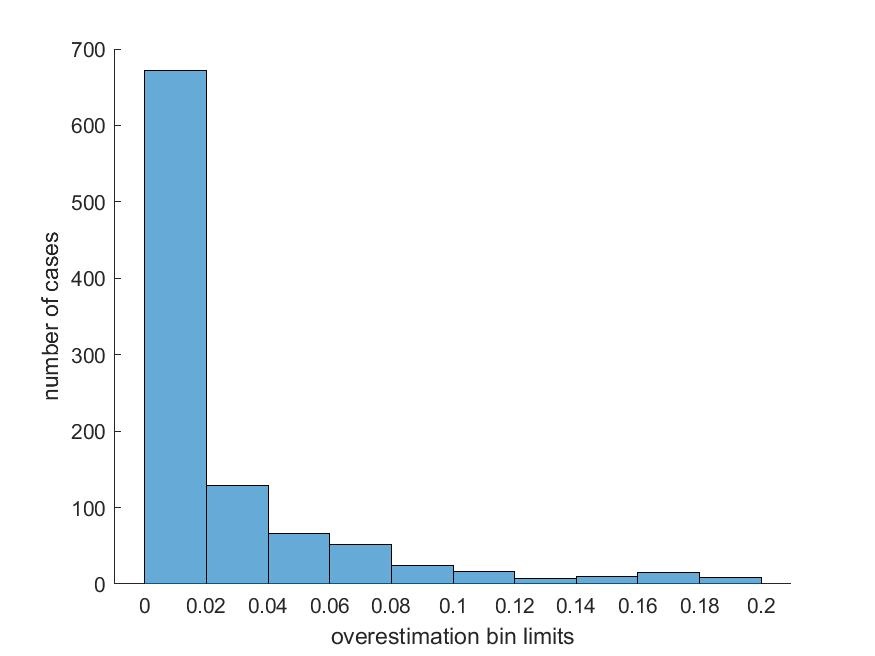}
		\caption{Example 1: Histogram counting the number of cases in each of ten bins, concerning the overestimation value $3L\frac{\sigma}{\beta}-\|F^{\frac{\sigma}{\beta}} \circ F_{2\frac{\sigma}{\beta}}(\hat\varphi)-\varphi\|_\infty$ obtained in our simulations. Most of the cases belong to the first bin, containing the functions $\varphi$ for which
$0\le 3L\frac{\sigma}{\beta}-\|F^{\frac{\sigma}{\beta}} \circ F_{2\frac{\sigma}{\beta}}(\hat\varphi)-\varphi\|_\infty\le 0.02$.}
		\label{histo1}
	\end{figure}
	This suggests that such an upper bound is quite accurate.
	
	\subsubsection{Second example}
	Let us consider the function $$\varphi(x):=
	\begin{cases}
		\frac{1}{1000}(x-5)(x-3)(x+1)(x+4)(x+5) &\mbox{if } -5\le x\le5 \\
		0 &\mbox{otherwise}
	\end{cases}$$ for $x\in\Rea$. The coefficient $\frac{1}{1000}$ was chosen in order to get that the Lipschitz constant $L$ is comparable to the one of the previous example. In this example $L=\frac{27}{25}$.
	Figure~\ref{fig_ex_2} illustrates how the function $\hat\varphi$ looks like compared to $\varphi$.
	\begin{figure}[t]
		\centering
		\includegraphics[width=4.8cm]{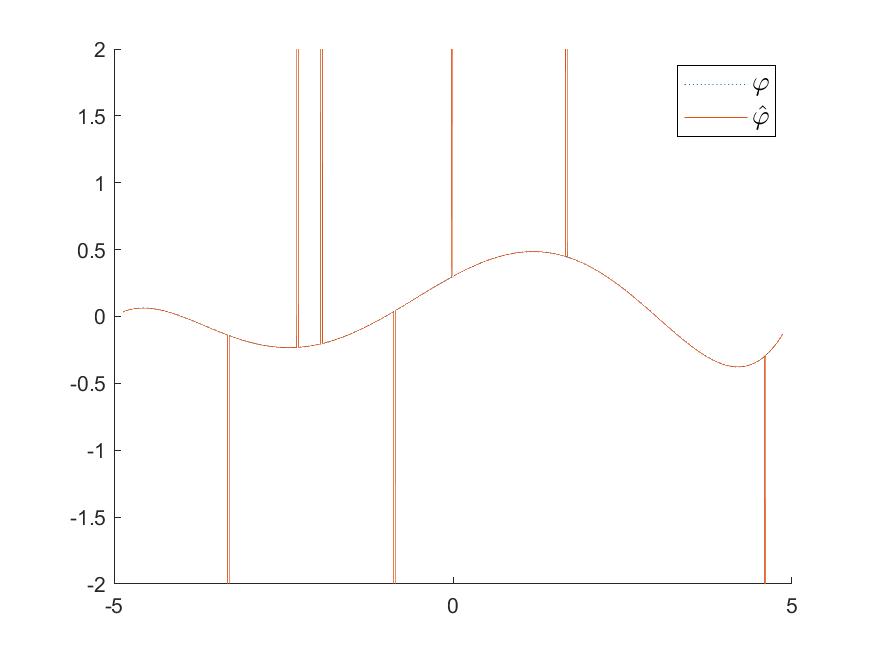}
		\caption{Example 2: Comparison between $\varphi$ and $\hat\varphi$.}
		\label{fig_ex_2}
	\end{figure}
	
	We will start by considering how well the convolution $\hat\varphi*T_n$ can approximate the original function $\varphi$, when $n$ goes from $3$ to $100$.
	From Figure~\ref{table_ex_2_A}, it is immediately apparent that the max-norm distance between $\hat\varphi*T_n$ and $\varphi$ remains quite large.
	\vskip 5cm
	
	\begin{figure}[t]
		\begin{center}
			\begin{tabular}{r c}
			\includegraphics[width=4.8cm]{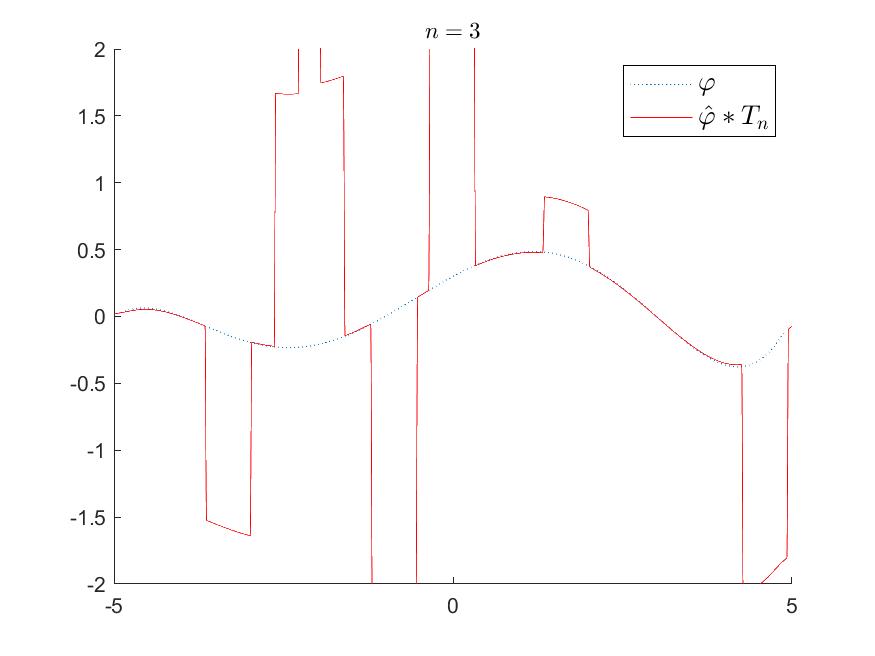} & \includegraphics[width=4.8cm]{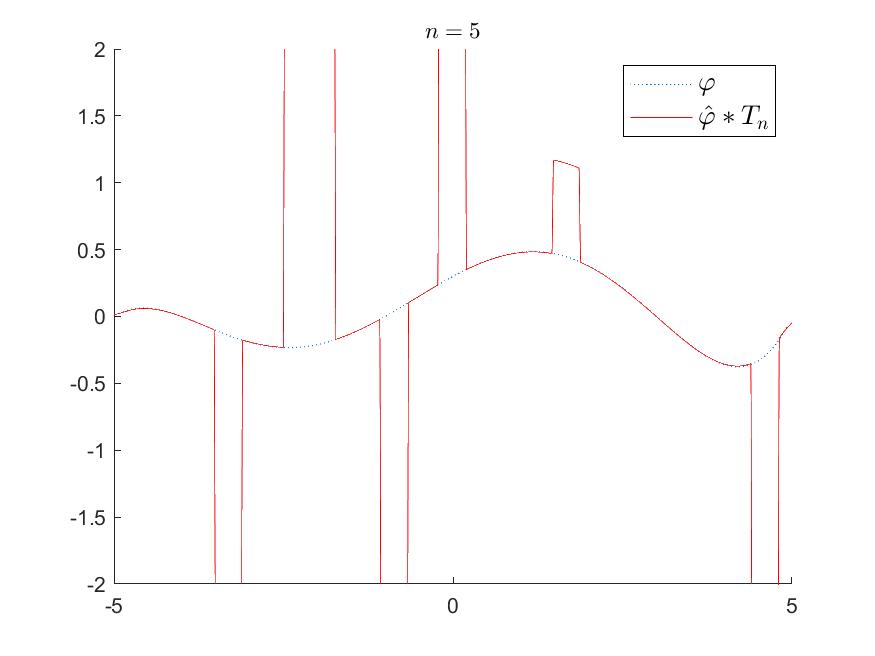} \\
			\includegraphics[width=4.8cm]{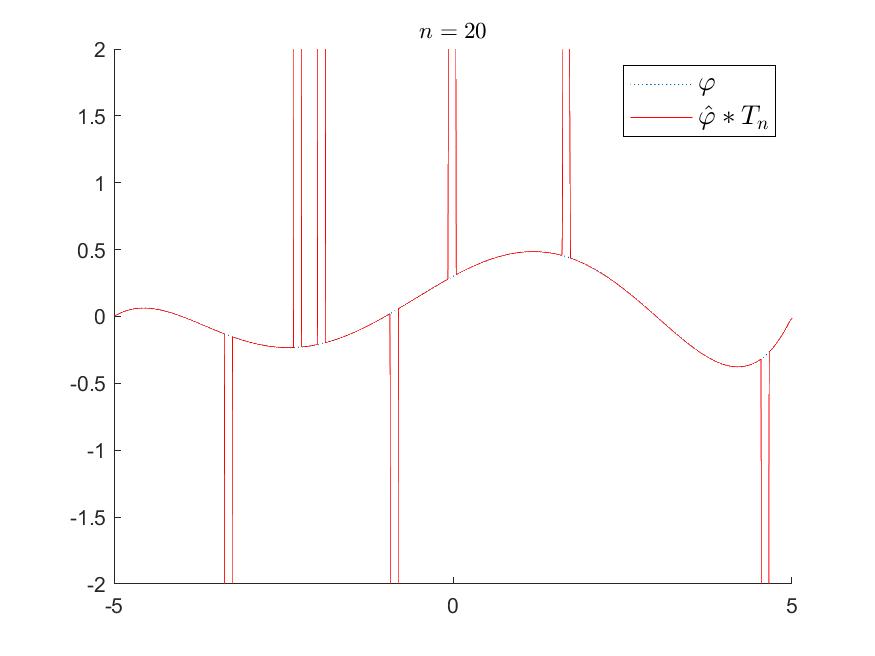} & \includegraphics[width=4.8cm]{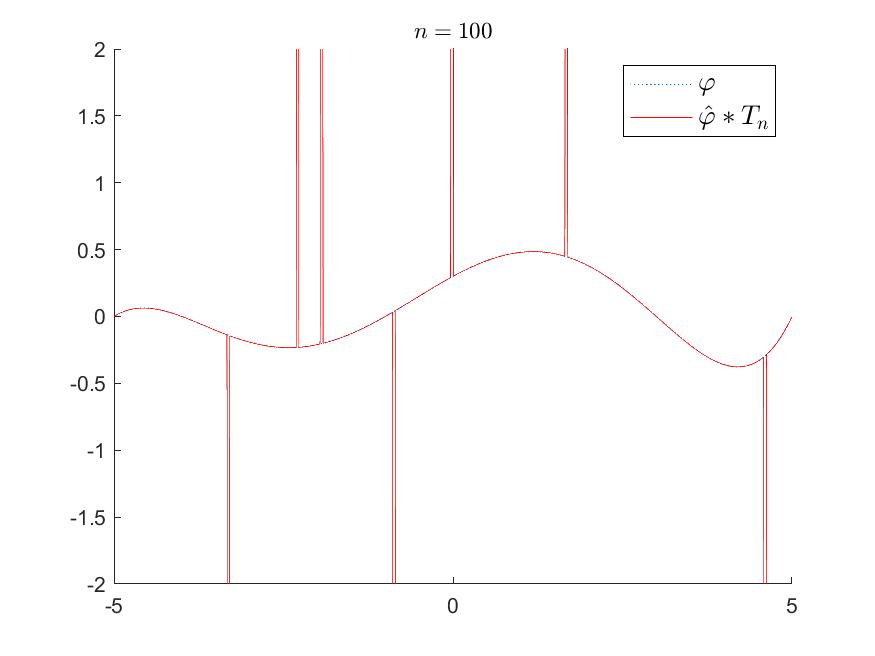}  \\
			\end{tabular}
			\caption{Example 2: Denoising via convolution with $T_h$ ($h=n$).}
			\label{table_ex_2_A}
		\end{center}
	\end{figure}
	
	If we apply a convolution with $T_\frac{1}{n}$, for $3\le n\le 100$, we get the results displayed in Figure~\ref{table_ex_2_B}, showing that all information represented by the function $\varphi$ is progressively destroyed.
	\begin{figure}[t]
		\begin{center}
			\begin{tabular}{r c}
			\includegraphics[width=4.8cm]{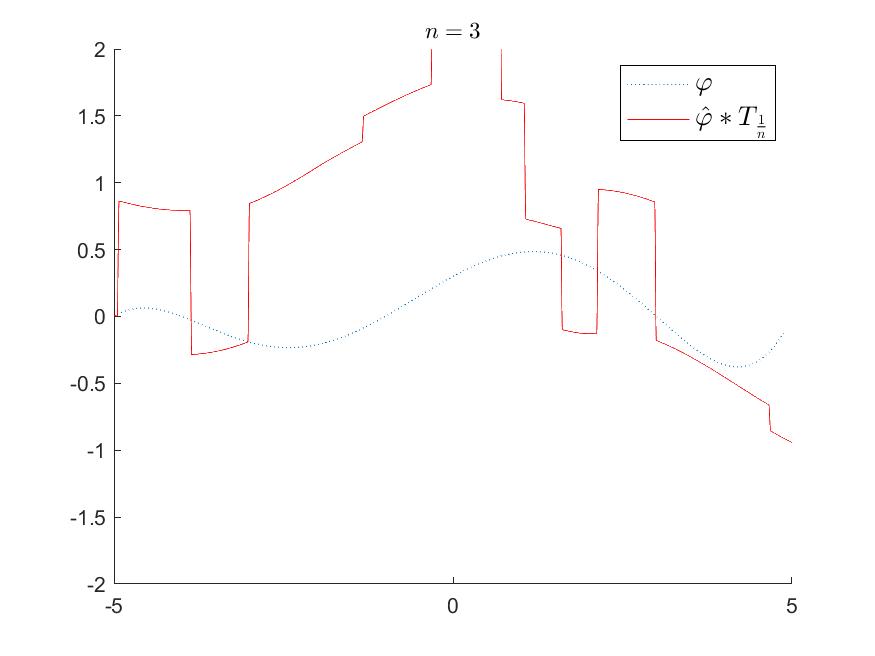} & \includegraphics[width=4.8cm]{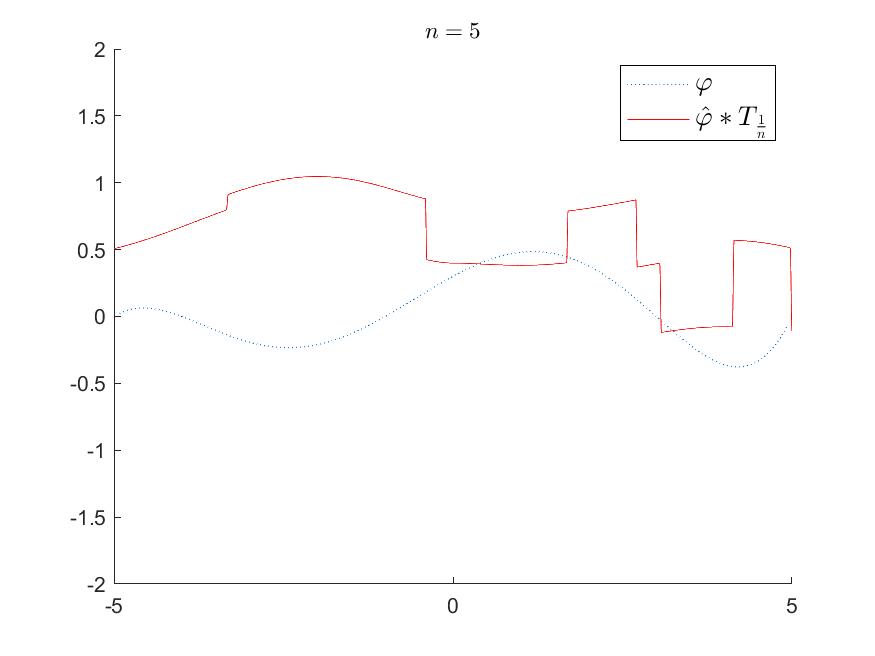} \\
			\includegraphics[width=4.8cm]{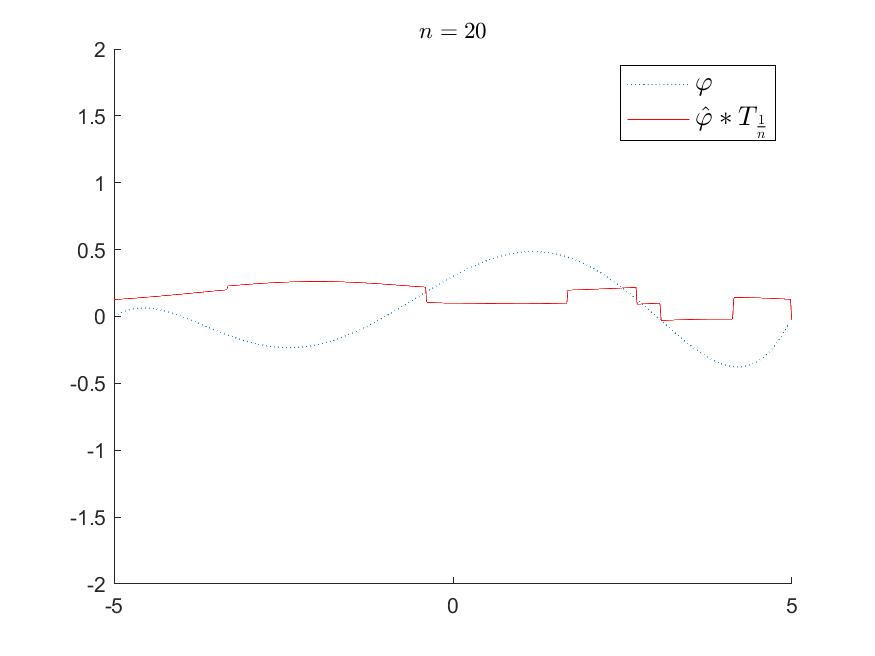} & \includegraphics[width=4.8cm]{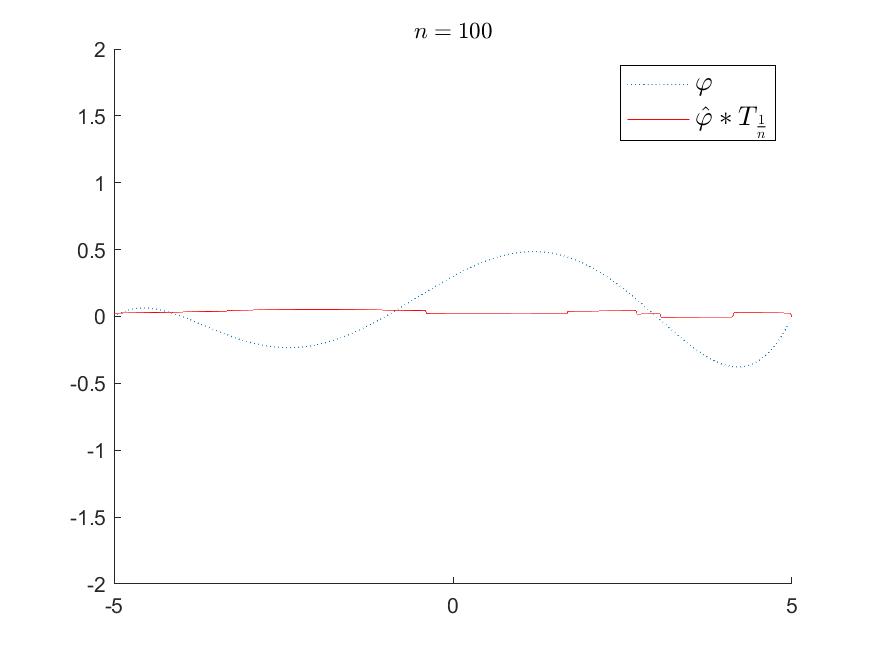}  \\
			\end{tabular}
			\caption{Example 2: Denoising via convolution with $T_h$ ($h=\frac{1}{n}$).}
			\label{table_ex_2_B}
		\end{center}
	\end{figure}
	In contrast, if we apply the operator $F^{\frac{\tau_n}{2}} \circ F_{\tau_n} \left(\hat\varphi\right)$, for $n=3,5,20,100$, we get the results displayed in Figure~\ref{table_ex_2_C}. As we can see, this operator is much more efficient in removing the bumps and restoring the function $\varphi$.
	
	When we apply a convolution with the function $T_h$ and check the corresponding errors via the sup-norm, we get the results displayed in Figure~\ref{table_ex_2_D}.
	As already seen, since $\lim\limits_{h\to+\infty}\hat\varphi*T_{h}=\hat\varphi$ and $\lim\limits_{h\to+\infty}\hat\varphi*T_{\frac{1}{h}}=0$ we get that the errors tend to  $\|\hat\varphi-\varphi\|_\infty=\max\limits_{i=1,\dots,k} \lvert a_i\rvert=\bar\alpha$ and $\|\varphi\|_\infty$, respectively.
	
	\begin{figure}[t]
		\begin{center}
			\begin{tabular}{r c}
			\includegraphics[width=4.8cm]{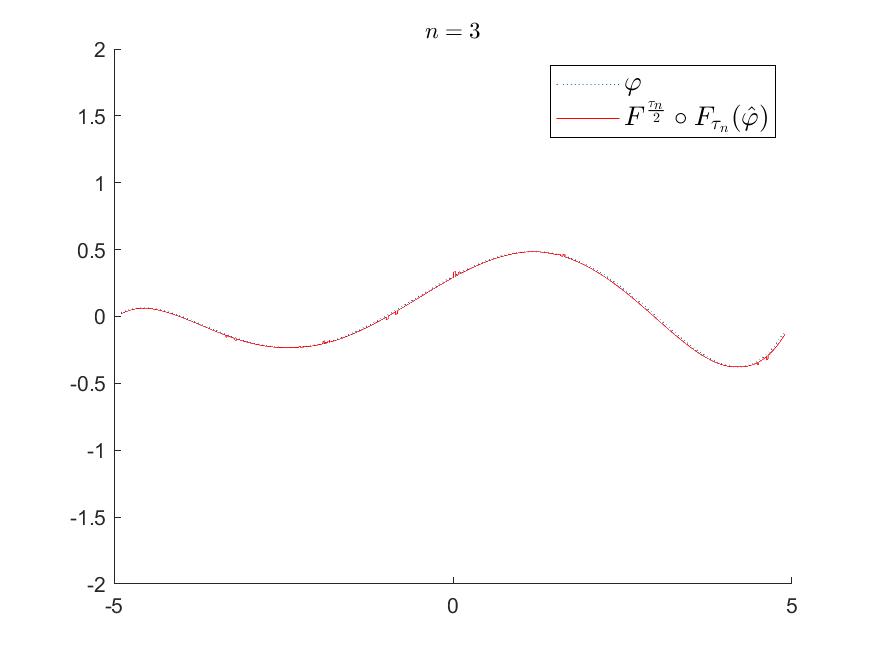} & \includegraphics[width=4.8cm]{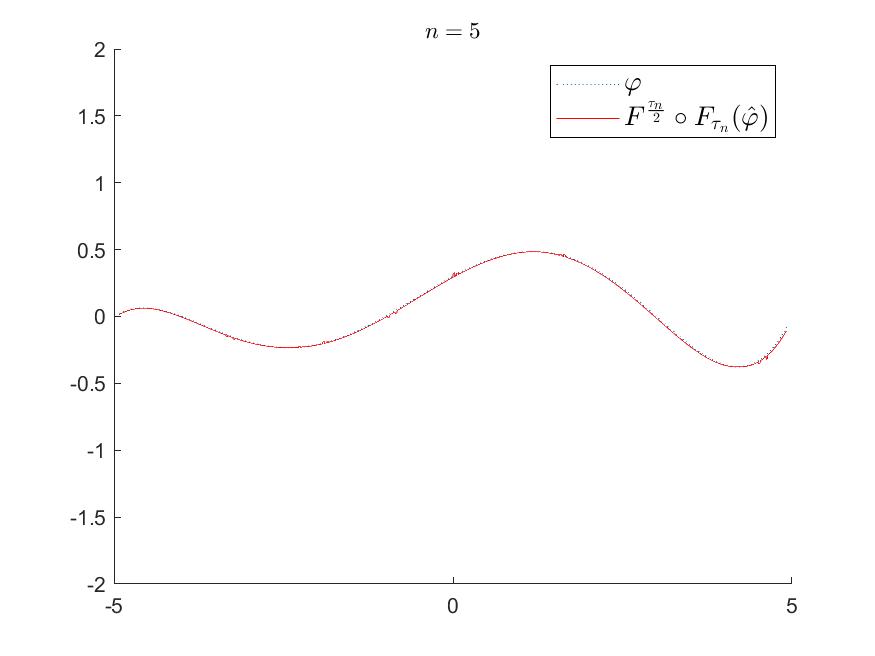} \\
			\includegraphics[width=4.8cm]{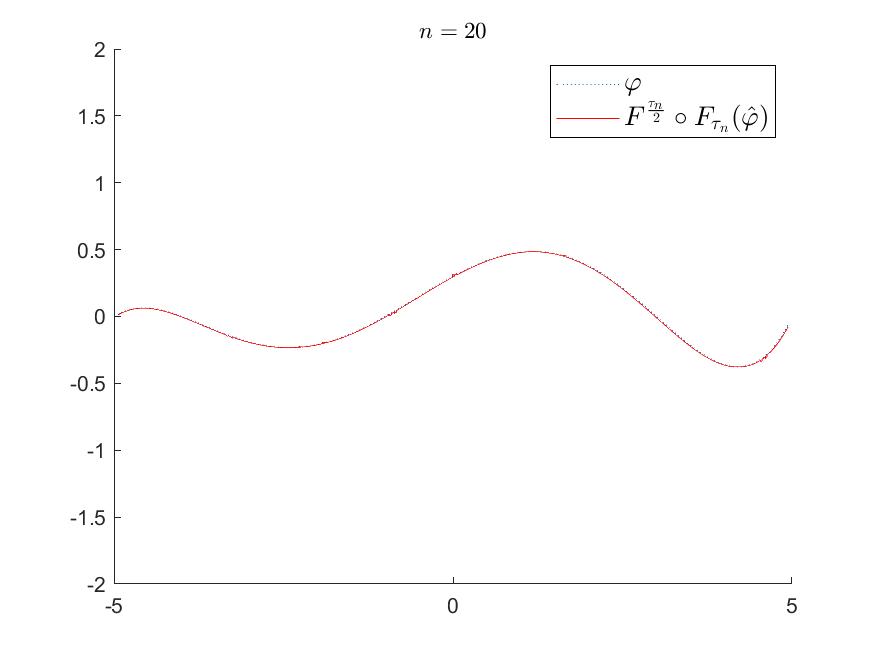} & \includegraphics[width=4.8cm]{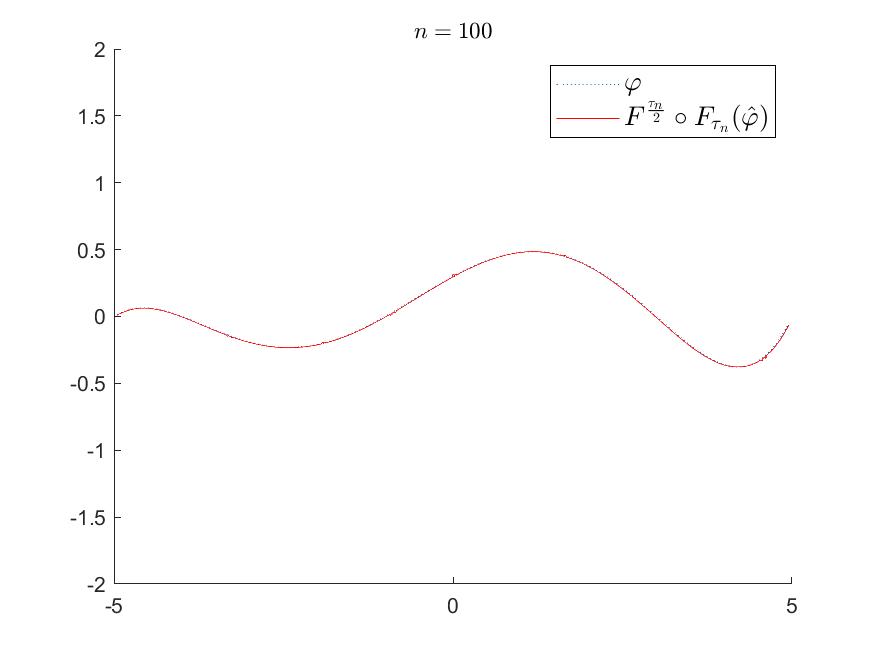}  \\
			\end{tabular}
			\caption{Example 2: Denoising via the proposed GENEO $F^{\frac{\tau_n}{2}} \circ F_{\tau_n}$.}
			\label{table_ex_2_C}
		\end{center}
	\end{figure}
	
	\begin{figure}[t]
		\begin{center}
			\begin{tabular}{c c}
	   	       \includegraphics[width=4.8cm]{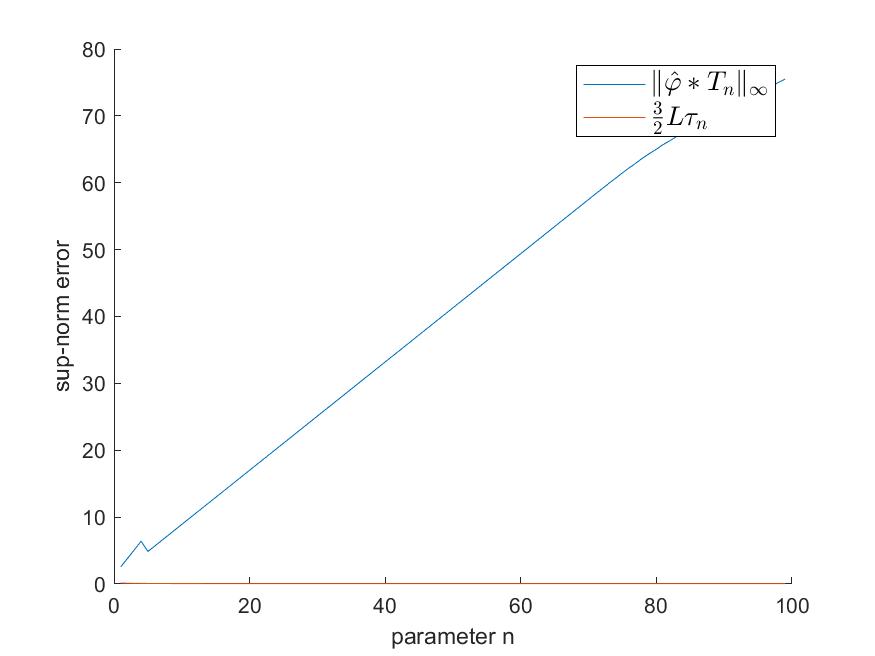} &
			   \includegraphics[width=4.8cm]{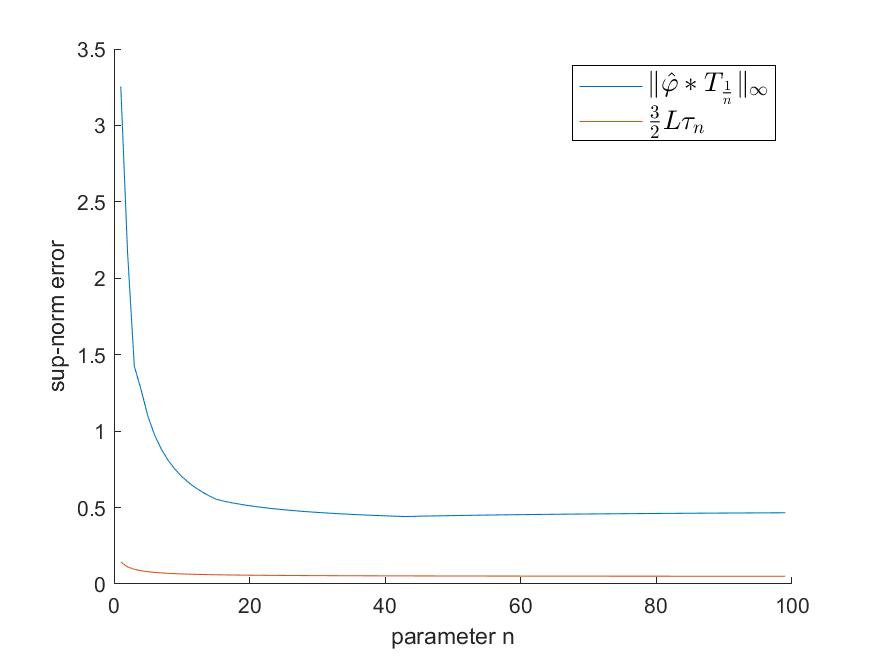}
			\end{tabular}
		\end{center}
		\caption{Example 2: Error made by applying a convolution with $T_n$ (left) or $T_\frac{1}{n}$ (right).}
		\label{table_ex_2_D}
	\end{figure}
	
	In contrast, if we apply the operator $F^{\frac{\tau_n}{2}} \circ F_{\tau_n}$, we get the results displayed in Figure~\ref{fig_ex_2_E}, showing that the upper bound for the error stated in Corollary~\ref{coroll2} is quite tight. As we can expect, we get the best denoising by replacing $\tau_n$ with $2\frac{\sigma}{\beta}$ (see Figure~\ref{fig_ex_2_F}).
	\begin{figure}[t]
		\centering
			\includegraphics[width=4.8cm]{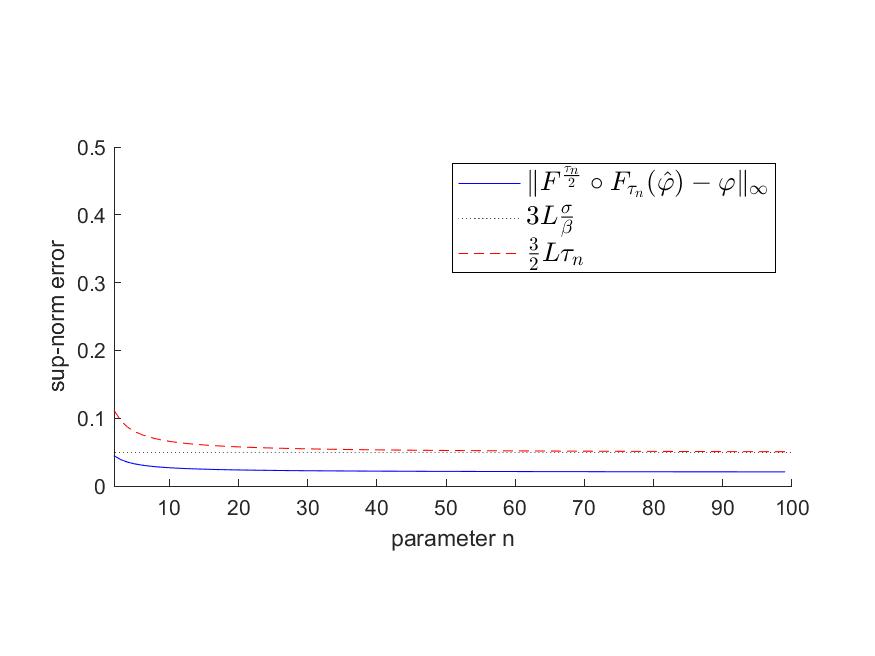}
		\caption{Example 2: Error made by using the GENEO $F^{\frac{\tau_n}{2}} \circ F_{\tau_n}$.}
		\label{fig_ex_2_E}
	\end{figure}
	
	\begin{figure}[t]
		\centering
		\includegraphics[width=4.8cm]{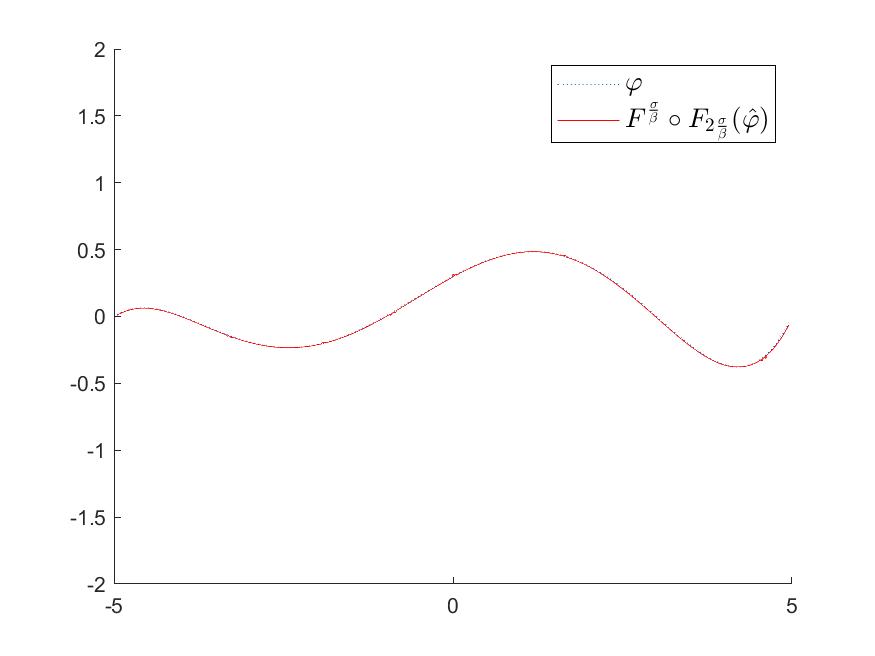}
		\caption{Example 2: Denoising via $F^\frac{\sigma}{\beta} \circ F_{2\frac{\sigma}{\beta}}$.}
		\label{fig_ex_2_F}
	\end{figure}
	
	Finally, we again executed 1000 simulations, using the same methodology as in the previous case, this time considering the polynomial presented at the beginning of this example. As we can see in Figure~\ref{histo2} the overestimation committed by our upper bound is often quite close to zero, relatively to the Lipschitz constant $L$ of the polynomial $\varphi$.
	\begin{figure}[t]
		\centering
		\includegraphics[width=4.8cm]{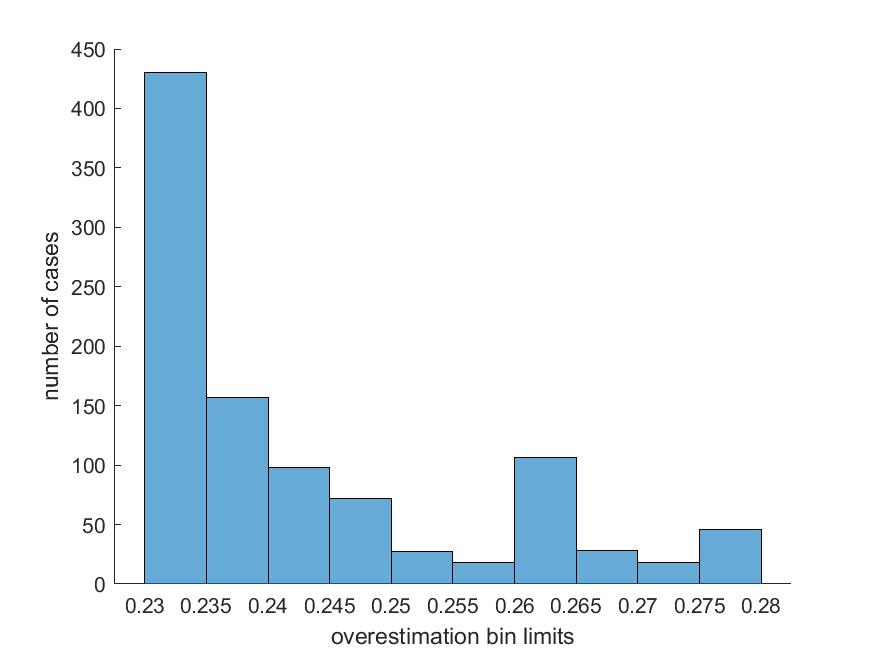}
		\caption{Example 2: Histogram counting the number of cases in each of ten bins, concerning the overestimation value $3L\frac{\sigma}{\beta}-\|F^{\frac{\sigma}{\beta}} \circ F_{2\frac{\sigma}{\beta}}(\hat\varphi)-\varphi\|_\infty$ obtained in our simulations. Most of the cases belong to the first two bins, containing the functions $\varphi$ for which
$0\le 3L\frac{\sigma}{\beta}-\|F^{\frac{\sigma}{\beta}} \circ F_{2\frac{\sigma}{\beta}}(\hat\varphi)-\varphi\|_\infty\le 0.24$.}
		\label{histo2}
	\end{figure}
	
\subsection{Experiments}
In order to check how good the upper bound stated in Theorem~\ref{main_result} is, we have made the following experiment.
In a first step, we have fixed $\ell=20$ and $\sigma=1.1$, and assumed as given the parameters $L>0$, $N\in\mathbb{N}$,
$\alpha>0$, $\beta>0$, $k\in\mathbb{N}$, $a_1,\ldots,a_k\in]0,\alpha[$, $b_1,\ldots,b_k>\beta$, $c_1,\ldots,c_k\in]0,\ell[$.

Firstly, we have used the parameters $L>0$, $N\in\mathbb{N}$ to generate a random $L$-Lipschitz function $\varphi$ in the following way.
We have randomly chosen and sorted in ascending order $N$ points $x_1,\ldots,x_N$ in the open interval $]0,\ell[$, with uniform distribution. Hence we have obtained the following decomposition: $]0,\ell[=]0,x_1]\cup]x_1,x_2]\cup\ldots\cup]x_N,\ell[$. We have defined our Lipschitz function $\varphi$ to be $0$ outside $]0,\ell[$.
After setting $(x_0,y_0)=(0,0)$ and $(x_{N+1},y_{N+1})=(\ell,0)$, for $i\in \{1,\ldots,N+1\}$
the value $y_i$ of the function at $x_i$ has been randomly chosen, with uniform distribution in an interval that allows for an $L$-Lipschitz extension to $[0,\ell]$ of the function, i.e., $\left[\max\{y_{i-1}-L(x_i-x_{i-1}),-L(\ell-x_i)\},\min\{y_{i-1}+L(x_i-x_{i-1}),L(\ell-x_i)\}\right]$.
Finally, the graph of the Lipschitz function $\varphi$ on $[0,\ell]$ has been obtained by connecting each point $(x_{i-1},y_{i-1})$ to $(x_i,y_i)$ with a segment, for $i\in \{1,\ldots,N+1\}$. We observe that $\varphi$ constructed this way is an $L$-Lipschitz function.

Secondly, we have
used the parameters $\alpha>0$, $\beta>0$, $k\in\mathbb{N}$, $a_1,\ldots,a_k\in]0,\alpha[$, $b_1,\ldots,b_k>\beta$, $c_1,\ldots,c_k\in]0,\ell[$
to generate a noise function as follows.
We have considered the mother function $\psi$ defined by setting $\psi(x):=e^{1-\frac{1}{1-x^2}}$ for $x\in\left]-1,1\right[$ and $\psi(x):=0$ for  $x\notin\left]-1,1\right[$.
For each $L$-Lipschitz function $\varphi$ produced in the previously described way, we have considered the function $\hat\varphi=\varphi+\sum\limits_{i=1}^k a_i\psi(b_i(x-c_i))$.

In Figures~\ref{fig_L_functions_k_3} and \ref{fig_L_functions_k_7} some examples of the functions we have produced are displayed, for $N=3$ and $N=7$, respectively.
In each figure, the functions are displayed without noise (left) and with added noise (right).
\begin{figure}[t]
	\begin{center}
		\begin{tabular}{c c}
			\includegraphics[width=4.8cm]{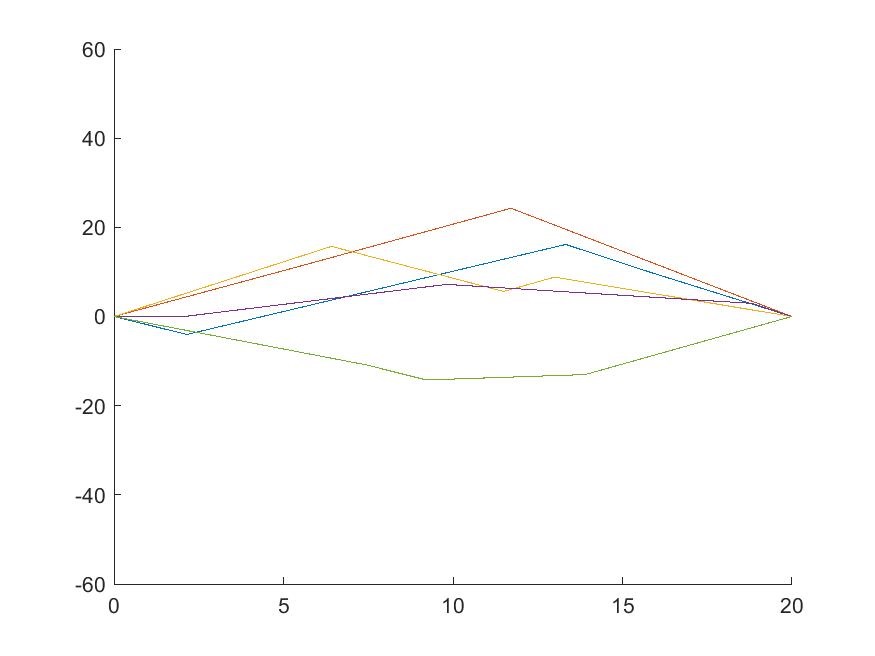} &
			\includegraphics[width=4.8cm]{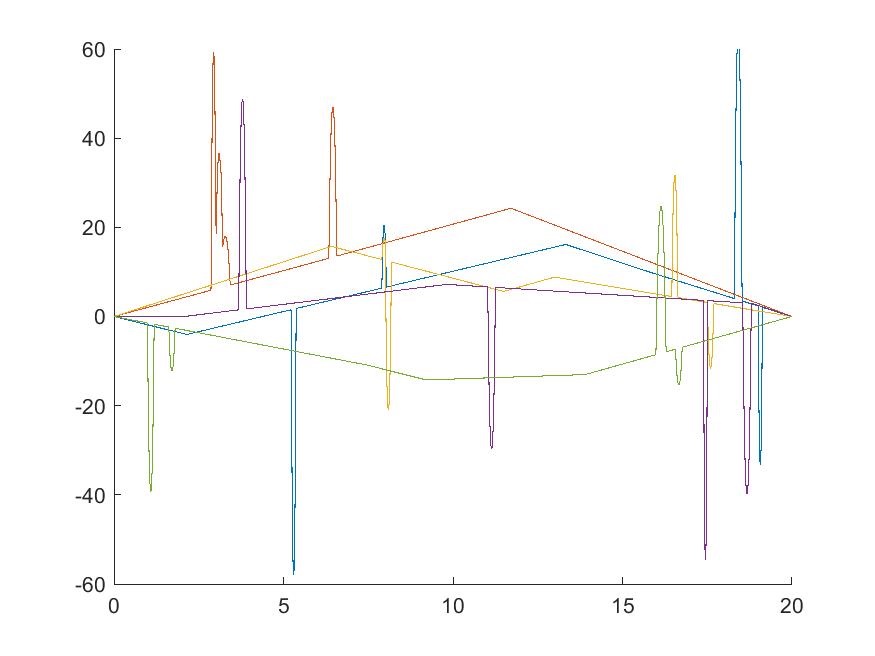}
			\end{tabular}
	\end{center}
	\caption{Five examples of the functions we have produced for $N=3$.
In each figure, the functions are displayed without noise (left) and with added noise (right).}
	\label{fig_L_functions_k_3}
	\end{figure}

\begin{figure}[t]
	\begin{center}
		\begin{tabular}{c c}
			\includegraphics[width=4.8cm]{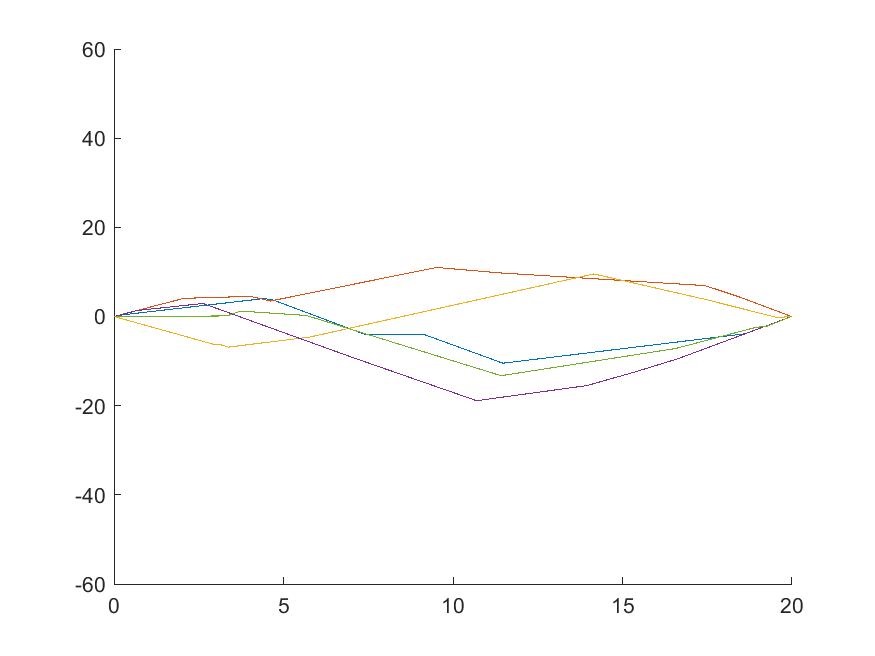} &
			\includegraphics[width=4.8cm]{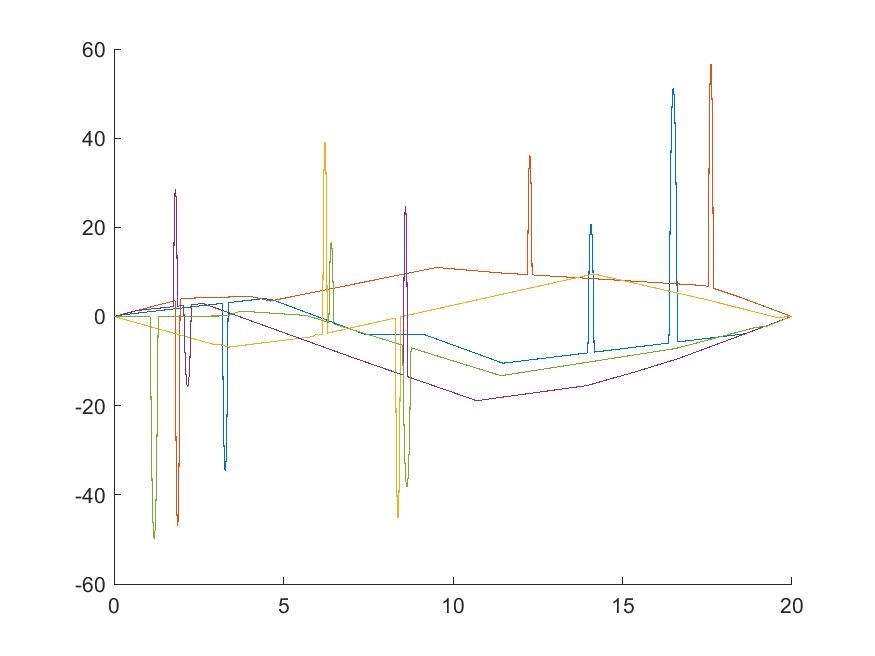}
			\end{tabular}
	\end{center}
	\caption{Five examples of the functions we have produced for $N=7$.
In each figure, the functions are displayed without noise (left) and with added noise (right).}
	\label{fig_L_functions_k_7}
	\end{figure}

In a second step, we have fixed $\ell=20$ and $\sigma=1.1$ once again, and considered a probabilistic model assuming that $\alpha$, $\beta$, $L$ are given and the values $N$, $k$, $a_i$, $b_i$, $c_i$ are random variables.
In this setting, we have compared the noise $\|\hat\varphi-\varphi\|_\infty$ with the probabilistic upper bound stated in Theorem~\ref{main_result} and the value $\left\|
F^{\frac{\sigma}{\beta}}\circ F_{2\frac{\sigma}{\beta}}(\hat\varphi)-\varphi
\right\|_\infty$, representing the reduced noise that we can obtain by applying our method.
In order to average our results, for each triplet $(\alpha,\beta,L)$ with $\alpha\in\{50, 55, 60, \ldots, 100\}$, $\beta\in\{3, 4, 5, \ldots, 13\}$, and $L\in\{1,2, \ldots, 10\}$, we have randomly generated $100$ examples of an $L$-Lipschitz function $\varphi$ and its noisy version $\hat\varphi$, by randomly choosing the parameters $N$, $k$, $a_i$, $b_i$, $c_i$ according to the following distributions:
\begin{itemize}
	\item $N \sim \mathrm{Unif}_{\{1,\ldots,10\}}$
	\item $k \sim \mathrm{Unif}_{\{1,\ldots,10\}}$
	\item $a_i \sim \mathrm{Unif}_{\left]0,\alpha\right[}$ for $i=1,\ldots,k$
	\item $b_i \sim \mathrm{Unif}_{\left]\beta,20\right[}$ for $i=1,\ldots,k$
	\item $c_i \sim \mathrm{Unif}_{\left]3\frac{\sigma}{\beta},\ell-3\frac{\sigma}{\beta}\right[}$ for $i=1,\ldots,k$.
\end{itemize}
Then for the chosen values of each one of the three variables $\alpha,\beta,L$, we have computed the mean, with respect to the other two variables,
of the average of the noise $\|\hat\varphi-\varphi\|_\infty$
and the average of the reduced noise $\left\|
F^{\frac{\sigma}{\beta}}\circ F_{2\frac{\sigma}{\beta}}(\hat\varphi)-\varphi
\right\|_\infty$ obtained by our method, both evaluated for $(\varphi,\hat\varphi)$ varying in the set of cardinality $100$ that we have produced. We have also computed the mean of the probabilistic upper bound stated in Theorem~\ref{main_result} with respect to the same two variables.
The results are displayed in Figures~\ref{fig_HISTOGRAM_VARALPHA_PROB}, \ref{fig_HISTOGRAM_VARBETA_PROB}, and \ref{fig_HISTOGRAM_VARL_PROB}.
We remind the reader that $\alpha$ and $\beta$ respectively express the maximum height and the thinness of the noise bumps, while $L$ is a Lipschitz constant for each function $\varphi$ we are interested in. These results illustrate the effectiveness of the use of GENEOs in the reduction of impulsive noise.
	
\begin{figure}[t]
	\begin{center}
		\includegraphics[width=7cm]{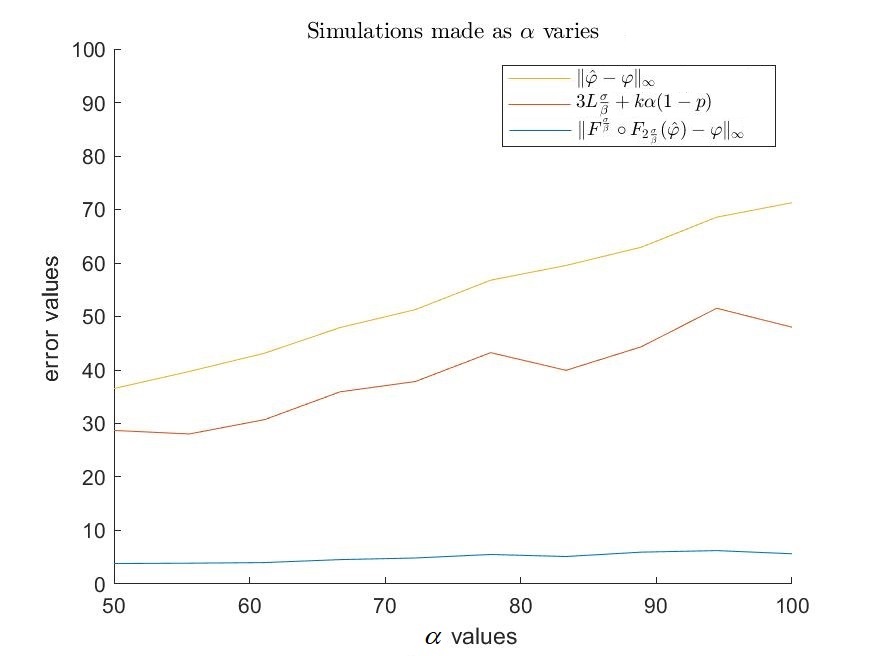}
	\end{center}
	\caption{Plots of the means of the averaged values of $\|\hat\varphi-\varphi\|_\infty$ (yellow), the means of
$3 L\frac{\sigma}{\beta}+k\alpha\left(1-\left(1-8\frac{(k-1)}{\ell}\frac{\sigma}{\beta}\right)^{k}\right)$ (brown), and the means of the averaged values of $\left\|
		F^{\frac{\sigma}{\beta}}\circ F_{2\frac{\sigma}{\beta}}(\hat\varphi)-\varphi\right\|_\infty$ (blue) for $\beta\in\{3, 4, 5, \ldots, 13\}$ and $L\in\{1,2, \ldots, 10\}$, when $\alpha$ varies in the set $\{50, 55, 60, \ldots, 100\}$.}
	\label{fig_HISTOGRAM_VARALPHA_PROB}
	\end{figure}

\begin{figure}[t]
	\begin{center}
		\includegraphics[width=7cm]{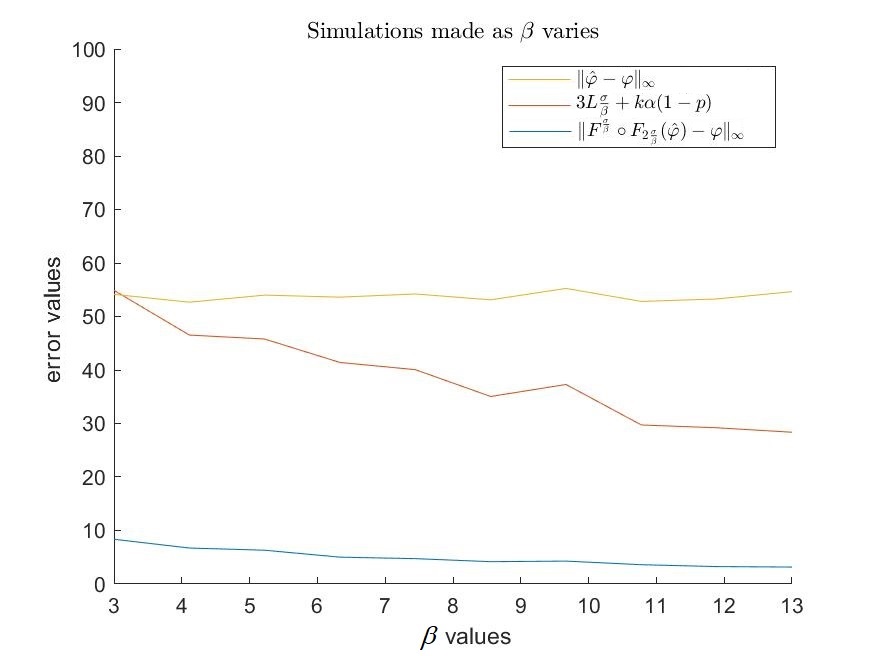}
	\end{center}
	\caption{Plots of the means of the averaged values of $\|\hat\varphi-\varphi\|_\infty$ (yellow), the means of
$3 L\frac{\sigma}{\beta}+k\alpha\left(1-\left(1-8\frac{(k-1)}{\ell}\frac{\sigma}{\beta}\right)^{k}\right)$ (brown), and the means of the averaged values of $\left\|
		F^{\frac{\sigma}{\beta}}\circ F_{2\frac{\sigma}{\beta}}(\hat\varphi)-\varphi\right\|_\infty$ (blue) for $\alpha\in\{50, 55, 60, \ldots, 100\}$ and $L\in\{1,2, \ldots, 10\}$, when $\beta$ varies in the set $\{3, 4, 5, \ldots, 13\}$.}
	\label{fig_HISTOGRAM_VARBETA_PROB}
	\end{figure}

\begin{figure}[t]
	\begin{center}
		\includegraphics[width=7cm]{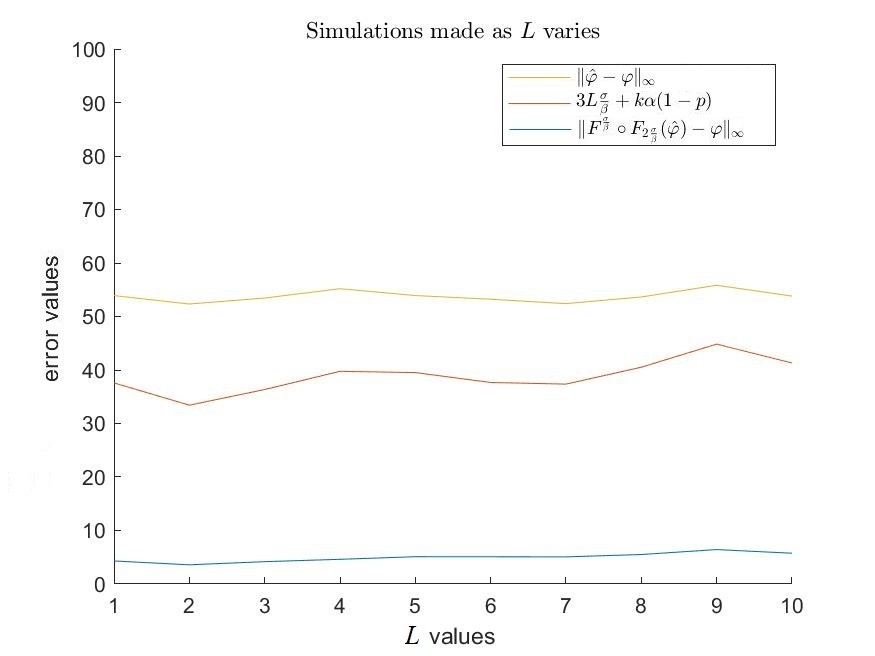}
	\end{center}
	\caption{Plots of the means of the averaged values of $\|\hat\varphi-\varphi\|_\infty$ (yellow), the means of
$3 L\frac{\sigma}{\beta}+k\alpha\left(1-\left(1-8\frac{(k-1)}{\ell}\frac{\sigma}{\beta}\right)^{k}\right)$ (brown), and the means of the averaged values of $\left\|
		F^{\frac{\sigma}{\beta}}\circ F_{2\frac{\sigma}{\beta}}(\hat\varphi)-\varphi\right\|_\infty$ (blue) for $\alpha\in\{50, 55, 60, \ldots, 100\}$ and $\beta\in\{3, 4, 5, \ldots, 13\}$, when $L$ varies in the set $\{1,2, \ldots, 10\}$.}
	\label{fig_HISTOGRAM_VARL_PROB}
	\end{figure}

\section{Conclusion}
In our paper we have proved a stability property for persistence diagrams of functions from $\mathbb{R}$ to $\mathbb{R}$, in the presence of impulsive noise.
This property shows that TDA can also be of use when noise drastically changes the topology of the sublevel sets of the filtering functions we are considering,
and stresses some new possible interaction between TDA and the theory of GENEOs. The experimental section shows that our approach is indeed able to remove the impulsive noise, in most of the cases. It would be interesting to check the possibility of extending our method to real-valued functions defined on $n$-dimensional domains, by selecting suitable GENEOs. We plan to devote our research to this topic in the future.

 \section*{Statements and Declarations}
This research has been partially supported by INdAM-GNSAGA.
The authors declare that they have no conflict of interest.
P.F. devised the project. All authors contributed to the manuscript. All authors read and approved the final manuscript.
The authors of this paper have been listed in alphabetical order.
Data available on request from the authors.
	
\end{document}